\documentclass[11pt,a4paper]{article}

\usepackage[T1]{fontenc}
\usepackage{newtxtext}
\usepackage[margin=1in]{geometry}
\usepackage{amsmath,amsthm,mathtools}
\usepackage{newtxmath}
\usepackage{graphicx,booktabs,array,tabularx,threeparttable}
\usepackage{caption,microtype,xcolor}
\usepackage{placeins}
\usepackage[round,authoryear]{natbib}
\usepackage[colorlinks=true,linkcolor=blue!45!black,
  citecolor=blue!45!black,urlcolor=blue!45!black]{hyperref}

\graphicspath{{./}}
\newtheorem{theorem}{Theorem}[section]
\newtheorem{proposition}[theorem]{Proposition}
\newtheorem{corollary}[theorem]{Corollary}
\newtheorem{lemma}[theorem]{Lemma}
\newtheorem{assumption}[theorem]{Assumption}
\theoremstyle{definition}

\theoremstyle{remark}
\newtheorem{remark}[theorem]{Remark}

\newcommand{\Q}{\mathbb{Q}}
\newcommand{\Pp}{\mathbb{P}}
\newcommand{\R}{\mathbb{R}}
\newcommand{\E}{\mathbb{E}}
\newcommand{\one}{\mathbf{1}}
\newcommand{\cI}{\mathcal{I}}
\newcommand{\cM}{\mathcal{M}}
\newcommand{\cl}{\operatorname{cl}}

\hypersetup{
  pdftitle={The Physical Crash Frontier: What Finite Option Quotes Can and Cannot Reveal},
  pdfauthor={Jirong Zhuang},
  pdfsubject={Sharp identification of option-implied physical crash risk},
  pdfkeywords={physical crash frontier, physical crash probability,
    partial identification, bid and ask quotes, pricing kernel}
}

\title{\textbf{The Physical Crash Frontier:\\ What Finite Option Quotes Can and Cannot Reveal}}
\author{Jirong Zhuang\\
\normalsize Department of Mathematics, University of Macau\\
\normalsize Avenida da Universidade Taipa, Macau 999078, China\\
\normalsize\texttt{yc27478@um.edu.mo}}
\date{August 2026}

\begin{document}
\maketitle

\begin{abstract}
Option prices are prices of insurance, so the risk-neutral probabilities they imply overstate physical crash risk.  A power utility pricing kernel undoes the premium.  But finitely many contracts trade, each at a bid and an ask, and many distributions fit inside the spreads.  Each implies its own crash probability and expected loss below a crash threshold.  This paper characterizes the attainable pairs exactly.  With bounded support, they form a compact convex set.  We call its boundary the physical crash frontier.  It separates what the quotes admit from what they rule out.  Both coordinates are ratios of moments, yet finite second-order cone programs trace the frontier.  In a thousand weekly S\&P 500 cross sections, quotes beyond the two puts nearest the threshold shrink the admissible probability range by a median of about 80 percent.  Remove the bound, and a vanishing mass deep in the right tail inflates the denominator of those ratios, so the crash probability slides toward zero while every quote stays priced within its spread.  The collapse occurs precisely when risk aversion exceeds that of the log investor.  A positive floor requires a tail restriction the quotes cannot supply.  What the quotes supply is the physical crash frontier.
\end{abstract}

\medskip
\noindent\textbf{Keywords:} physical crash frontier, physical crash probability, partial identification, bid and ask quotes, pricing kernel

\smallskip
\noindent\textbf{JEL Classification:} G12, G13, C14, C61

\section{Introduction}
\label{sec:introduction}

Severe market declines are difficult to measure for the same reason they matter: they are rare.  At horizons of a few weeks, realized returns carry little information about the probability of a crash or the depth of losses once a threshold is crossed.  Option markets price this risk continuously, and puts struck below the threshold trade at observable bid and ask quotes.  But option prices are state prices.  They measure the value of protection, and the same fear that makes measurement urgent also raises that value: risk-neutral crash probabilities overstate physical crash risk, most severely in times of stress \citep{martinshi2026forecasting}.  Converting prices into physical probabilities requires a restriction on some investor's preferences.  The question this paper answers is how much the contracts that actually trade reveal about physical crash risk once that restriction is imposed.

The standard empirical route fits a smooth surface through the observed strikes and reads a single risk-neutral density from its curvature.  The fitted surface mixes market information with assumptions about payoffs that never traded, and it hides how much the quote cross section itself constrains the answer.  We keep the ambiguity explicit.  For a fixed cross section, we collect every risk-neutral distribution that prices each retained quote within its spread, has unit forward mean, and is supported on a bounded interval. \citet{zhuang2026likelydeepsharpjoint} characterizes the joint sets that the quote and forward restrictions alone generate for the risk-neutral probability and depth of a crash.  Those sets discipline the prices of crash insurance, and a price of insurance blends the probability of the loss with the compensation for bearing it.

This paper takes the remaining step.  Each admissible distribution is mapped through the pricing kernel of a power utility investor who holds the market, the restriction of \citet{martin2017expected} as generalized by \citet{martinshi2026forecasting}, whose crash probability bounds forecast realized crashes.  The image is the physical distribution as that investor perceives it.  Every statement below is conditional on that kernel: the results describe what the quotes imply for this investor.

Two coordinates summarize the physical crash risk each admissible distribution implies: the probability that the market finishes below the threshold, and the expected loss below it, which the paper calls the shortfall.  Collecting these pairs across all admissible distributions yields an identified set in the sense of the partial identification literature \citep{manski2003partial}.  We call the boundary of that set the cross section's physical crash frontier.  It separates the risk scenarios the quotes admit from those they rule out. The set is convex and compact, and its support function is the value of a finite second-order cone program.  Under an explicit regularity condition the set is sharp: every point is generated, or approached arbitrarily closely, by an admissible distribution, and every excluded point is bounded away from all of them.  The construction nests its risk-neutral predecessor as the \(\gamma=0\) member of a one-parameter family indexed by risk aversion.  Under the maintained kernel and support condition, the frontier is what the quotes pin down.  Everything tighter is an assumption.

The step from prices to physical probabilities changes the mathematics of the problem. Risk-neutral crash quantities are linear in the distribution, and their sets are computed by linear programming.  Physical quantities are ratios.  Under the \(\gamma=2\) benchmark, the conversion weights each state by the square of its normalized market payoff, so the crash probability becomes the second moment of the crash states divided by the second moment of all states.  The denominator is the problem.  It collects every state, including payoffs far above the last quoted strike, where no contract trades, so distributions that price every quote identically can still disagree about it.  A vanishing sliver of mass, placed at an ever larger payoff, inflates it without limit, so the physical probability slides toward zero while every quote stays priced.  This tail dilution channel is why no finite cross section of vanilla quotes, on its own, bounds the power utility investor's crash probability away from zero, and why a positive floor requires a restriction on the far right tail.  The support condition is therefore a substantive assumption, and the analysis makes its cost explicit.  Conditional depth, the expected further decline given a crash, is immune to the channel: it is the ratio of shortfall to probability, and their common denominator cancels.

Empirically, the traded quotes are informative, and their information is concentrated.  The sample contains more than a thousand S\&P 500 cross sections, sampled weekly from 2013 through 2023 at three fixed horizons around one month. Expanding the quote set from the two puts nearest a ten percent decline threshold to all retained out-of-the-money quotes shrinks the crash probability interval by a median of about 80 percent.  Most of the contraction arrives with the six additional puts closest to the threshold.  Even with every retained quote, the upper endpoint stays two to three times the lower endpoint.  The resulting sets track market conditions, rising during the August 2015 selloff, early 2018, the COVID-19 crisis, and the 2022 selloff.  Both endpoints jump upward together in March 2020: stress widens the range of crash probabilities the quotes admit at the same time as it raises their level.

The frontier is where the joint nature of identification pays off.  Risk scenarios are stated in physical terms, a crash probability paired with an expected loss, so screening them requires the joint set of the physical quantities.  Marginal intervals overstate what any single distribution can deliver, because their endpoints are generally attained by different distributions.  The physical crash frontier cuts these unattainable corners away, and it cuts deepest in the shortfall direction.  A candidate pair describes a feasible scenario only if one admissible distribution delivers both coordinates at once, and checking a pair is a single conic feasibility problem (Corollary~\ref{cor:scenario-audit}).

The paper proceeds as follows.  Section~\ref{sec:literature} places the analysis in the literature.  Section~\ref{sec:framework} develops the identification framework and the exact computation. Section~\ref{sec:empirical} describes the data and design. Sections~\ref{sec:marginal-results} and~\ref{sec:joint-results} report the marginal and joint results, and Section~\ref{sec:conclusion} concludes. Appendices collect the proofs, the log-investor benchmark, and supplementary evidence.

\section{Related Literature}
\label{sec:literature}

With a continuum of option prices, the risk-neutral distribution is identified under regularity conditions \citep{breedenlitzenberger1978}. Empirical work builds on that identification: \citet{pan2002jump} estimates jump risk premia from the joint dynamics of spot and option prices in a parametric model, \citet{backuschernovmartin2011disasters} compare disaster distributions implied by index options with those calibrated to macroeconomic data, and \citet{bollerslevtodorov2011tails} separate compensation for tail events from the events themselves, reading risk-neutral jump tails from short-maturity options and physical jump tails from high-frequency returns.  The premium component of these tail measures forecasts returns \citep{bollerslevtodorovxu2015}.  Each approach reaches its point estimates by adding structure, whether a parametric model or a tail extrapolation.  We start one step earlier and ask what is identified before that structure is imposed.

Physical crash probabilities can be approached from three directions. Survey respondents report subjective crash probabilities far above historical crash frequencies \citep{goetzmannkimshiller2016}, a direct window on beliefs.  Reading the assessments embedded in traded prices takes a different route. Recovery arguments attempt to extract physical probabilities from state prices while replacing preference restrictions with structural ones: \citet{hansenscheinkman2009} develop the operator decomposition that underlies these arguments, \citet{ross2015recovery} establishes recovery for a Markov state with a transition-independent kernel, and \citet{borovickahansenscheinkman2016} show that the recovered measure differs from the physical one whenever the kernel has a nondegenerate martingale component, so an explicit restriction is unavoidable.  A third route supplies the restriction directly.  \citet{aitsahalialo2000} and \citet{jackwerth2000riskaversion} infer risk aversion by comparing option-implied state prices with return histories, \citet{rosenbergengle2002} estimate time-varying pricing kernels projected on market returns, and \citet{blisspanigirtzoglou2004} calibrate utility so that the transformed densities best forecast realized returns.

The restriction maintained here descends from \citet{martin2017expected}, whose benchmark investor has log utility and holds the market.  \citet{martinshi2026forecasting} generalize his expectation identity to arbitrary risk aversion, convert risk-neutral crash probabilities into physical ones, document that the risk-neutral versions cry wolf, overstating crash risk most severely in crises, and calibrate risk aversion to two on market returns.  Their stock-level bounds rely on Fr\'echet--Hoeffding arguments because marginal option prices do not identify the joint distribution of a stock with the market.  For the market index itself, the dependence problem disappears, and identification uncertainty comes instead from finite strikes, bid--ask intervals, and the unobserved right tail.

A finite cross section turns identification into a generalized moment problem, and the idea of answering with a set is old.  \citet{lo1987semiparametric} derives upper bounds for option prices and expected payoffs from moment restrictions, and \citet{bassett1997bounds} bounds the probabilities of future prices by the set of distributions that could have generated the observed option values.  \citet{bertsimaspopescu2002} derive best-possible bounds on option prices from moment information by semidefinite programming, and \citet{kingkoivupennanen2005} compute arbitrage-free price bounds for a claim from the prices of traded options through convex duality.  \citet{davishobson2007} characterize when a finite set of option prices is consistent with an arbitrage-free model.  \citet{coxobloj2011} derive model-free bounds for barrier claims from call and digital call quotes, where mass exactly at a barrier changes the answer.  In a single-maturity bid--ask martingale transport problem, \citet{liangetal2026} derive a square-root convergence rate for upper-semicontinuous payoffs of bounded variation as spreads vanish.

We read the resulting solution sets as identified sets in the sense of \citet{manski2003partial} and describe them through their support functions, following \citet{beresteanumolinari2008}.  The closest computational relatives are \citet{barratttuckboyd2024}, who bound the cumulative distribution and distributional risk measures over risk-neutral probabilities on a discretized price grid, and \citet{vaneekelen2023}, who extends generalized moment duality to conditional expectations, the fractional form that conditional depth takes here.  The direct predecessor is \citet{zhuang2026likelydeepsharpjoint}, which derives sharp joint bounds on risk-neutral crash probability and conditional depth from finite bid--ask quotes and shows that separate marginal extrema need not be jointly attainable.

We differ from this predecessor in the object that is identified: the bounds here concern physical crash risk under the benchmark kernel.  Positive risk aversion makes the transform to physical probabilities normalize by a moment of the unknown distribution.  The normalization turns linear objectives into ratios, replaces the linear programs of the risk-neutral analysis with second-order cone programs, and demands a support condition that the risk-neutral problem does not need.  That condition is the source of the tail identification problem studied below.

\section{From Option Quotes to Sets of Physical Crash Probabilities}
\label{sec:framework}

We work one cross section at a time.  This section defines the admissible risk-neutral distributions, constructs the physical crash quantities they imply, and characterizes and computes the identified sets exactly.

\subsection{Normalized prices and admissible measures}
\label{sec:normalization}

Fix a valuation date \(t\) and an option maturity \(T\).  Let \(D\) denote the discount factor and \(F\) the forward price of the cross section.  The random variable \(S_T\) is the terminal index level, and we work with the forward-normalized terminal level
\begin{equation}
  X=\frac{S_T}{F}.
  \label{eq:normalized-terminal}
\end{equation}
The normalization makes \(X\) dimensionless, and under the normalized risk-neutral measure \(\Q\) its first moment equals one.

Quote \(i\) has absolute strike \(K_i^{\mathrm{abs}}\) and normalized strike \(k_i=K_i^{\mathrm{abs}}/F\).  We write \(\underline c_i\) and \(\overline c_i\) for the normalized bid and ask of a call payoff at that strike, so a retained call quote imposes
\begin{equation}
  \underline c_i
  \leq \E_{\Q}\!\left[(X-k_i)^+\right]
  \leq \overline c_i.
  \label{eq:quote-interval}
\end{equation}
Put quotes enter the same form through put--call parity: their normalized endpoints are the put bid and ask divided by \(DF\), with \(1-k_i\) added to each endpoint.

Let \(\cI\) denote a finite set of retained quote indices, and let \(B>1\) be an upper support bound.  The admissible set is
\begin{equation}
\begin{split}
  \cM(\cI)
  =\biggl\{\Q:\;&
  \Q\ \text{is a probability measure on }[0,B],\\
  &\E_{\Q}[X]=1,\quad
  \underline c_i\leq
  \E_{\Q}\!\left[(X-k_i)^+\right]
  \leq\overline c_i\quad\text{for every }i\in\cI
  \biggr\}.
  \label{eq:admissible-measures}
\end{split}
\end{equation}
In general \(\Q\) is the normalized state-price measure for maturity \(T\), and when discounting is deterministic it is the usual risk-neutral measure.  Support on \([0,B]\) is a maintained economic restriction whose role Section~\ref{sec:tail-theory} isolates.  The set \(\cM(\cI)\) also depends on \(B\) and on \((D,F)\), which are held fixed throughout the analysis, so we suppress them and keep only the quote set \(\cI\), the object the empirical design varies.

\subsection{The power utility investor and the crash quantities}
\label{sec:physical-transform}

\begin{assumption}[Power utility pricing kernel]
\label{ass:kernel}
The investor is marginal in the market, has CRRA utility with coefficient \(\gamma>0\), and is fully invested in market wealth.  Her gross return on market wealth is \(R_m=cX\) for a constant \(c>0\), her stochastic discount factor is proportional to \(R_m^{-\gamma}\), and gross returns are strictly positive.
\end{assumption}

For each \(\gamma>0\) and admissible \(\Q\), let \(M_\gamma(\Q)=\E_{\Q}[X^\gamma]\).  Assumption~\ref{ass:kernel} defines the physical measure \(\Pp_\gamma^{\Q}\) perceived by the investor through
\begin{equation}
  \frac{d\Pp_\gamma^{\Q}}{d\Q}
  =\frac{X^\gamma}{M_\gamma(\Q)}.
  \label{eq:physical-transform}
\end{equation}
Normalized state pricing makes \(d\Q/d\Pp\) proportional to the stochastic discount factor.  Since that factor is proportional to \(X^{-\gamma}\), inversion and normalization under \(\Q\) give the displayed density.  The transform restates, for the normalized terminal level, the expectation identity that \citet{martin2017expected} derives for a fully invested log investor and \citet{martinshi2026forecasting} extend to arbitrary risk aversion.  The support and forward restrictions imply \(0<M_\gamma(\Q)\leq B^\gamma\), so the transform is well defined for every admissible measure.\footnote{In Assumption~\ref{ass:kernel}, the relation \(R_m=cX\) treats \(X\) as market wealth after deterministic dividend and numeraire adjustments.  Admissible measures with mass at zero enter as a closure relaxation of the positive-return model and leave the closed identified sets unchanged (Lemma~\ref{lem:positive-return-closure}).}

Fix a threshold \(K\in(0,1)\), held constant throughout, and define the crash event \(\{X\leq K\}\).  For each exponent \(r\geq0\), define the truncated moment
\[
A_r(\Q)=\E_{\Q}\!\left[X^r\one\{X\leq K\}\right].
\]
The crash probability and the unconditional shortfall, the expected loss below the threshold, are
\begin{align}
  q_{\gamma,K}(\Q)
  &=\Pp_\gamma^{\Q}(X\leq K)
    =\frac{A_\gamma(\Q)}{M_\gamma(\Q)},
  \label{eq:general-q}\\
  \ell_{\gamma,K}(\Q)
  &=\E_{\Pp_\gamma^{\Q}}\!\left[(K-X)^+\right]
    =\frac{K A_\gamma(\Q)-A_{\gamma+1}(\Q)}
           {M_\gamma(\Q)}.
  \label{eq:general-ell}
\end{align}
When \(A_\gamma(\Q)>0\), conditional depth is the ratio
\begin{equation}
  s_{\gamma,K}(\Q)
  =\E_{\Pp_\gamma^{\Q}}\!\left[K-X\mid X\leq K\right]
  =K-\frac{A_{\gamma+1}(\Q)}{A_\gamma(\Q)}
  =\frac{\ell_{\gamma,K}(\Q)}{q_{\gamma,K}(\Q)}.
  \label{eq:general-depth}
\end{equation}
Thus \(\ell_{\gamma,K}\) measures the unconditional distance below \(K\), while \(1-K+s_{\gamma,K}\) is the conditional decline from the normalized forward.\footnote{Throughout, shortfall refers to this expected loss below the fixed threshold, not to the quantile-based risk measure that shares the name.}  Setting \(\gamma=0\) in \eqref{eq:general-q}--\eqref{eq:general-ell} makes the transform the identity and recovers the risk-neutral crash probability and normalized put value whose joint sets \citet{zhuang2026likelydeepsharpjoint} characterizes.  That paper places no upper bound on the state, so the \(\gamma=0\) sets here are the support-capped counterparts of its joint sets.  Positive \(\gamma\) reweights each admissible measure by \(X^\gamma\), undoing the premium that state prices attach to low-payoff states: probability mass shifts toward high-payoff states, physical crash probabilities fall below their risk-neutral counterparts, and, through \(M_\gamma(\Q)\), the reweighting is tied to a global moment of the unknown distribution.

Mass exactly at the threshold needs its own coordinate.  Every retained payoff \((X-k_i)^+\) is continuous, so a finite quote cross section cannot separate an atom at \(K\) from mass just above it, yet the two differ for the closed event \(\{X\leq K\}\).  Let \(a_K=\Q\{K\}\) and let \(R_r(\Q)=\E_{\Q}[X^r\one\{X<K\}]\) denote the open truncated moment. Then
\begin{equation}
  A_r=R_r+K^r a_K,
  \qquad
  K A_\gamma-A_{\gamma+1}
  =K R_\gamma-R_{\gamma+1},
  \label{eq:general-threshold-atom}
\end{equation}
so an atom at the threshold adds \(K^\gamma a_K/M_\gamma(\Q)\) to the crash probability and nothing to the shortfall.

\subsection{The identified sets and their geometry}
\label{sec:identified-set}

Each admissible measure delivers one probability--shortfall pair.  The identified set collects every pair that admissible measures attain or approach:
\begin{equation}
  \Theta^{\gamma}_{q\ell}(\cI)
  =\cl
  \left\{
    \bigl(q_{\gamma,K}(\Q),\ell_{\gamma,K}(\Q)\bigr):
    \Q\in\cM(\cI)
  \right\},
  \label{eq:general-joint-set}
\end{equation}
where \(\cl\) denotes Euclidean closure.  The closure matters because the event indicator jumps at \(K\): a sequence of admissible measures can slide mass toward the threshold and change the event probability only in the limit, so the raw image need not be closed.  Conversely, any point outside \(\Theta^{\gamma}_{q\ell}(\cI)\) is at a positive distance from every pair an admissible measure can generate.

Conditional depth is a ratio, and ratios need their own set:
\begin{equation}
  \Theta^{\gamma}_{s}(\cI)
  =
  \cl\left\{
    K-\frac{A_{\gamma+1}(\Q)}{A_\gamma(\Q)}:
    \Q\in\cM(\cI),\ A_\gamma(\Q)>0
  \right\},
  \label{eq:general-depth-set}
\end{equation}
which is empty if no admissible measure gives the event positive weighted mass.  Dividing the shortfall interval by the probability interval can overstate the range, because the measures that drive the four endpoints generally differ.  The set in \eqref{eq:general-depth-set} keeps numerator and denominator tied to a common measure.

\begin{theorem}[Identified sets under a power kernel]
\label{thm:sharp-set}
Fix \(\gamma>0\) and \(K\in(0,1)\).  Suppose \(B<\infty\) and \(\cM(\cI)\) is nonempty.  Then \(\Theta^{\gamma}_{q\ell}(\cI)\) is a nonempty compact convex subset of
\[
\{(q,\ell):0\leq q\leq1,\ 0\leq\ell\leq Kq\}.
\]
If some admissible \(\Q\) has \(A_\gamma(\Q)>0\), then \(\Theta^{\gamma}_{s}(\cI)\) is a nonempty compact interval contained in \([0,K]\).
\end{theorem}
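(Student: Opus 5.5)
Nonemptiness and the containment are immediate; the real work is convexity. The plan is to exploit the fact that both coordinates are ratios of linear functionals of \(\Q\) with a common positive denominator, so the map from \(\Q\) to \((q,\ell)\) is a linear-fractional image of a convex set. Convexity of linear-fractional images of convex sets is standard, and compactness follows from boundedness plus the closure in \eqref{eq:general-joint-set}.

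First, the containment. For admissible \(\Q\), \(M_\gamma(\Q)\in(0,B^\gamma]\), since \(\E_\Q[X]=1\) forces positive mass on \((0,B]\). Then \(0\le A_\gamma\le M_\gamma\) gives \(q\in[0,1]\). Since \(0\le (K-X)^+X^\gamma\le K X^\gamma\one\{X\le K\}\), we get \(0\le K A_\gamma-A_{\gamma+1}\le K A_\gamma\), so \(0\le\ell\le Kq\). The target set is closed, so the closure stays inside it. Nonemptiness is inherited from \(\cM(\cI)\ne\emptyset\). Boundedness plus closedness gives compactness.

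Second, convexity. \(\cM(\cI)\) is convex, being cut out by linear constraints on probability measures. Take \(\Q_0,\Q_1\in\cM(\cI)\) with images \(z_0,z_1\), and fix a target weight \(\lambda\in[0,1]\). Set \(\Q_\mu=(1-\mu)\Q_0+\mu\Q_1\). Because numerator and denominator are both linear in \(\mu\),
\[
z(\Q_\mu)=\frac{(1-\mu)M_\gamma(\Q_0)z_0+\mu M_\gamma(\Q_1)z_1}{(1-\mu)M_\gamma(\Q_0)+\mu M_\gamma(\Q_1)},
\]
which is a convex combination of \(z_0,z_1\) with weight \(\mu M_1/((1-\mu)M_0+\mu M_1)\). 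This weight is a continuous increasing bijection of \([0,1]\) onto itself, since both \(M\)'s are positive. Choosing \(\mu\) to make the weight equal \(\lambda\) shows the raw image is convex, and the closure of a convex set is convex.

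Third, the depth set. Here \(s=K-A_{\gamma+1}/A_\gamma\) on \(\{A_\gamma>0\}\), which is a convex subset of \(\cM(\cI)\). The same linear-fractional argument, now with denominator \(A_\gamma\), makes the raw image an interval. For \(X\le K\) we have \(0\le X^{\gamma+1}\le K X^\gamma\), so \(A_{\gamma+1}/A_\gamma\in[0,K]\), and hence \(s\in[0,K]\). Its closure is therefore a nonempty compact interval in \([0,K]\).

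The main obstacle is essentially bookkeeping: making sure \(M_\gamma>0\) on all of \(\cM(\cI)\). Atoms at zero do not matter, because the forward constraint forces mass away from zero. The closure also needs care because of the atom at \(K\), but it plays no role in the argument: convexity and boundedness are preserved under closure, and the paper's sharpness refinements are not part of this statement.
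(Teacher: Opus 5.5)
Your proof is correct and follows the paper's argument. Convexity comes from mixing the measures themselves with a weight that undoes the two normalizing denominators: the paper writes the inverse weight explicitly as \(\alpha=\theta d_1/(\theta d_1+(1-\theta)d_0)\), while you get it from your monotone reparametrization of the mixing weight. The depth interval comes from the same weighted-average identity with denominator \(A_\gamma\). The only cosmetic difference is in the containment: you bound the integrand directly, \((K-X)^+X^\gamma\le KX^\gamma\one\{X\le K\}\), while the paper argues through \(\ell=qs\) with \(s\in[0,K]\).
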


Convexity is not automatic, because mixing two measures does not mix their probability--shortfall pairs linearly: each pair carries its own normalizing denominator \(M_\gamma(\Q)\).  The proof corrects for the denominators: mixing the measures themselves with weights inversely proportional to their normalizing moments delivers any prescribed convex combination of the two pairs, and admissibility survives because the quote and forward restrictions are linear in the measure.  Appendix~\ref{app:proofs} records the algebra together with the compactness argument.

The boundary of the joint set records where the maintained restrictions bind, and we call it the cross section's physical crash frontier. It is the physical counterpart of the risk-neutral crash frontier of \citet{zhuang2026likelydeepsharpjoint}. By Theorem~\ref{thm:sharp-set} the set is compact and convex, so it is recovered from its support function \citep{rockafellar1970convex}.  For a compact convex \(\Theta\subset\R^2\) and a direction \(u=(u_q,u_\ell)\in\R^2\), write
\begin{equation}
  H_\Theta(u)
  =
  \sup_{(q,\ell)\in\Theta}
  \left(u_q q+u_\ell\ell\right).
  \label{eq:general-support-function}
\end{equation}
The value \(H_\Theta(u)\) locates the supporting line with outward normal \(u\), and the set is the intersection of the half planes those lines bound:
\[
\Theta = \bigcap_{\lVert u\rVert_2=1} \left\{(q,\ell): u_q q+u_\ell\ell\leq H_\Theta(u)\right\}.
\]
The directions \((\pm1,0)\) and \((0,\pm1)\) return the marginal interval endpoints, and intermediate directions carry the joint information that the marginals miss.

Adding quotes can only shrink the sets.  If \(\cI_1\subseteq\cI_2\), then \(\cM(\cI_2)\subseteq\cM(\cI_1)\), so both identified sets under \(\cI_2\) are subsets of their counterparts under \(\cI_1\).  This monotonicity organizes the empirical design in Section~\ref{sec:empirical}.  The sets are monotone in the support bound as well: enlarging \(B\) enlarges the admissible set, so both identified sets expand weakly with \(B\), and Section~\ref{sec:tail-results} traces this monotone family.

\subsection{Exact computation under \(\gamma=2\)}
\label{sec:finite-representation}

Theorem~\ref{thm:sharp-set} holds for every \(\gamma>0\), and integer values of \(\gamma\) admit exact finite representations.  The empirical benchmark is \(\gamma=2\), the calibrated choice in \citet{martinshi2026forecasting}, and Appendix~\ref{app:gamma-one} develops the \(\gamma=1\) log investor.

From here on we fix \(\gamma=2\), drop the superscript on the sets, and write
\[
q_{2,K},\quad \ell_{2,K},\quad s_{2,K},\quad M(\Q)=\E_\Q[X^2],\quad \Theta_{q\ell}(\cI),\quad \Theta_{s}(\cI).
\]
The exact statements below maintain one regularity condition.

\begin{assumption}[Interior regularity (a Slater-type condition)]
\label{ass:regularity}
Some \(\Q^\circ\in\cM(\cI)\) prices every retained quote strictly inside its interval, and there are bounded open intervals \(U_-\subset(0,1)\) and \(U_+\subset(1,B)\), disjoint from all strike and threshold boundaries, with \(\Q^\circ(U_-)>0\) and \(\Q^\circ(U_+)>0\).
\end{assumption}

Computing a support value means maximizing a linear functional of \((q_{2,K},\ell_{2,K})\) over all admissible measures.  This is an infinite-dimensional problem whose objectives are ratios of moments.  Two transformations turn it into a finite convex program: a change of scale removes the ratio, and a partition of the support at the strikes compresses the measure into finitely many moments whose feasible values are characterized exactly.

The change of scale is the linear-fractional normalization of \citet{charnescooper1962}.  Define \(\rho=M(\Q)^{-1}\) and the scaled measure \(d\widetilde{\Q}=\rho\,d\Q\).  The mass, forward, and second-moment restrictions become
\begin{equation}
  \int d\widetilde{\Q}=\rho,\qquad
  \int X\,d\widetilde{\Q}=\rho,\qquad
  \int X^2\,d\widetilde{\Q}=1,
  \label{eq:cc-normalization}
\end{equation}
and the two crash coordinates become linear in the scaled measure:
\begin{equation}
  q_{2,K}
  =
  \int_{[0,K]}X^2\,d\widetilde{\Q},
  \qquad
  \ell_{2,K}
  =
  \int_{[0,K]}(K-X)X^2\,d\widetilde{\Q}.
  \label{eq:scaled-targets}
\end{equation}
Each quote band is multiplied by \(\rho\), which keeps the quote restrictions linear in the pair \((\widetilde{\Q},\rho)\).  Nothing is lost in the change of variables: since \(0\leq X\leq B\) and \(\E_\Q[X]=1\), the global moment satisfies \(1\leq M(\Q)\leq B\), so \(\rho\in[1/B,1]\) and \(\Q\) is recovered from \((\widetilde{\Q},\rho)\) by division.

The partition uses the strikes themselves.  Let \(0=b_0<b_1<\cdots<b_J=B\) be the ordered distinct elements of \(\{0,K,B\}\cup\{k_i:i\in\cI,\ 0<k_i<B\}\), and call the intervals \(I_j=[b_j,b_{j+1}]\) cells.  Decompose the scaled measure into local components,
\[
\widetilde{\Q}=\sum_{j}\mu_j,
\]
where \(\mu_j\) is a nonnegative measure supported on \(I_j\).  Such a decomposition always exists.  An atom at a shared strike endpoint may be assigned to either adjacent cell, and no statement below depends on that choice, because every retained payoff is continuous there.  The threshold is the one exception: any atom at \(K\) is assigned to the cell beginning at \(K\) and tracked by the explicit coordinate introduced below.  For each cell \(j\) and integer order \(r\), define the scaled cell moment
\[
y_{rj}=\int x^r\,d\mu_j(x),
\]
and let the vector \(\boldsymbol y\) collect these moments, with orders \(r=0,1,2,3\) on the cells whose interiors lie below \(K\) and orders \(r=0,1,2\) elsewhere.  These are all the moments the problem uses.  The normalization \eqref{eq:cc-normalization} reads
\begin{equation}
  \sum_j y_{0j}=\rho,\qquad
  \sum_j y_{1j}=\rho,\qquad
  \sum_j y_{2j}=1.
  \label{eq:global-moment-constraints}
\end{equation}
Every retained call payoff is affine on every cell, because its only kink sits at a strike and every strike is a cell boundary.  The scaled value of quote \(i\) is therefore linear in the cell masses and first moments,
\begin{equation}
  C_i(\boldsymbol y)
  =
  \sum_{j:b_j\geq k_i}(y_{1j}-k_i y_{0j}),
  \qquad
  \rho\,\underline c_i
  \leq C_i(\boldsymbol y)
  \leq\rho\,\overline c_i,
  \label{eq:scaled-quote-constraints}
\end{equation}
and the targets \eqref{eq:scaled-targets} are linear in the second and third moments of the cells below the threshold.

The remaining question is which vectors \(\boldsymbol y\) come from actual local measures.  This is the truncated moment problem on an interval, and on bounded cells it has a complete answer.  On a cell \([a,b]\) with orders up to three, a vector \((y_0,y_1,y_2,y_3)\) is the moment sequence of some nonnegative measure on \([a,b]\) if and only if two localizing matrices are positive semidefinite:
\begin{equation}
  \begin{pmatrix}
    y_1-a y_0 & y_2-a y_1\\
    y_2-a y_1 & y_3-a y_2
  \end{pmatrix}
  \succeq0,
  \qquad
  \begin{pmatrix}
    b y_0-y_1 & b y_1-y_2\\
    b y_1-y_2 & b y_2-y_3
  \end{pmatrix}
  \succeq0.
  \label{eq:order-three-cone}
\end{equation}
On a cell with orders up to two, the conditions are the moment matrix
\begin{equation}
  \begin{pmatrix}
    y_0 & y_1\\
    y_1 & y_2
  \end{pmatrix}
  \succeq0,
  \label{eq:order-two-cone}
\end{equation}
combined with the support inequalities
\begin{equation}
  ay_0\leq y_1\leq by_0,
  \qquad
  y_2\leq(a+b)y_1-ab y_0.
  \label{eq:order-two-support}
\end{equation}
These conditions are necessary and sufficient on a bounded interval \citep[see][]{kreinnudelman1977,bertsimaspopescu2002}.  A two-by-two positive semidefinite constraint is in turn one second-order cone constraint \citep{alizadehgoldfarb2003}, since
\[
  \begin{pmatrix}u&v\\v&w\end{pmatrix}\succeq0
  \quad\Longleftrightarrow\quad
  u\geq0,\quad w\geq0,\quad
  \lVert(2v,u-w)\rVert_2\leq u+w.
\]
Cell by cell, the feasible moment vectors form an intersection of linear restrictions and second-order cone constraints.

One coordinate remains: the threshold atom of Section~\ref{sec:physical-transform}.  Let \(\widetilde a_K\geq0\) denote the scaled event mass at \(X=K\).  The atom is booked in the cell beginning at \(K\), whose moment conditions apply net of the atom, a bookkeeping step recorded in Appendix~\ref{app:proofs}.  With \(\mathcal J_-\) collecting the cells whose interiors lie below \(K\), so that the cell beginning at \(K\) is excluded, the targets are the linear functions
\begin{equation}
  q_{2,K}
  =\sum_{j\in\mathcal J_-}y_{2j}+K^2\widetilde a_K,
  \qquad
  \ell_{2,K}
  =\sum_{j\in\mathcal J_-}(K y_{2j}-y_{3j}),
  \label{eq:cell-targets}
\end{equation}
so the atom adds probability and no shortfall, exactly as in \eqref{eq:general-threshold-atom}.

The program is now complete.  Its decision variables are the cell moments \(\boldsymbol y\), together with the two scalars \(\rho\) and \(\widetilde a_K\).  Its constraints are of three kinds: the global equalities \eqref{eq:global-moment-constraints}, the pair of scaled quote inequalities \eqref{eq:scaled-quote-constraints} for each retained quote, and one moment cone per cell, \eqref{eq:order-three-cone} on the cells below the threshold and \eqref{eq:order-two-cone}--\eqref{eq:order-two-support} elsewhere, with the cell beginning at \(K\) taken net of the atom.  Maximizing the linear objective \(u_q q_{2,K}+u_\ell\ell_{2,K}\) from \eqref{eq:cell-targets} is therefore a finite second-order cone program (SOCP) for each direction \(u\).\footnote{For integer \(\gamma\), the same construction applies with cell moments up to order \(\gamma+1\) below the threshold and order \(\gamma\) elsewhere.  The truncated moment conditions become positive semidefinite constraints on larger moment and localizing matrices \citep{kreinnudelman1977,bertsimaspopescu2002}, so the support values are computed by semidefinite programming.  Second-order cones suffice exactly when \(\gamma\leq2\), where every matrix is two by two.}

\begin{corollary}[Exact computation under \(\gamma=2\)]
\label{cor:gamma-two-socp}
Suppose Assumption~\ref{ass:regularity} holds.  Then \(\Theta_{q\ell}(\cI)\) equals the image of the SOCP feasible region in \((\boldsymbol y,\rho,\widetilde a_K)\) under the linear map \eqref{eq:cell-targets}, and every support value \(H_{\Theta_{q\ell}}(u)\) is both the SOCP optimum and the supremum over admissible measures.  If some admissible measure has \(A_2(\Q)>0\), analogous programs give the endpoints of the closed interval \(\Theta_s(\cI)\).
\end{corollary}

Appendix~\ref{app:proofs} proves the corollary in two directions.  Every admissible measure produces a feasible vector, and every feasible vector is produced by an admissible measure or by a limit of admissible measures.  Assumption~\ref{ass:regularity} supplies the approximating sequences.  Conditional depth needs a second normalization: its denominator is the truncated second moment, so scaling by \(A_2(\Q)^{-1}\) makes the depth objective linear, and the same cells and moment cones apply.  Unlike \(\rho\), the depth scaling has no upper bound, so the depth programs need not attain their optimal values, which under Assumption~\ref{ass:regularity} still equal the endpoints of the closed depth interval.

\begin{corollary}[Scenario feasibility]
\label{cor:scenario-audit}
Under Assumption~\ref{ass:regularity}, a pair \((q^\circ,\ell^\circ)\) lies in \(\Theta_{q\ell}(\cI)\) if and only if the SOCP feasible region is nonempty once the two linear targets \eqref{eq:cell-targets} are fixed at \((q^\circ,\ell^\circ)\).  Screening a candidate crash scenario is one conic feasibility problem.
\end{corollary}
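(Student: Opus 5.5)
Corollary~\ref{cor:scenario-audit} follows directly from Corollary~\ref{cor:gamma-two-socp}, which gives \(\Theta_{q\ell}(\cI)=L(\mathcal F)\). Here \(\mathcal F\) is the SOCP feasible region in \((\boldsymbol y,\rho,\widetilde a_K)\), and \(L\) is the linear map \eqref{eq:cell-targets}. A point lies in the image of a set under a map exactly when its preimage meets that set. So \((q^\circ,\ell^\circ)\in\Theta_{q\ell}(\cI)\) holds if and only if \(\mathcal F\cap L^{-1}(q^\circ,\ell^\circ)\neq\emptyset\). That intersection is \(\mathcal F\) with two additional linear equality constraints, namely \(\sum_{j\in\mathcal J_-}y_{2j}+K^2\widetilde a_K=q^\circ\) and \(\sum_{j\in\mathcal J_-}(Ky_{2j}-y_{3j})=\ell^\circ\). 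Adding linear equalities to an SOCP keeps it an SOCP, so the membership test is a single conic feasibility problem, as claimed.

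The one point needing care is that Corollary~\ref{cor:gamma-two-socp} asserts equality with the image \(L(\mathcal F)\), not merely with its closure. If it only gave \(\Theta_{q\ell}=\cl L(\mathcal F)\), a boundary pair could be approached by feasible vectors without being attained, and the feasibility test would wrongly reject it. To guard against this, I would verify that \(L(\mathcal F)\) is closed. It suffices that \(\mathcal F\) is compact, because the image of a compact set under a continuous map is compact and hence closed.

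\(\mathcal F\) is closed because it is cut out by finitely many nonstrict linear and conic inequalities. For boundedness, \(\rho\in[1/B,1]\) from the normalization in \eqref{eq:cc-normalization}. Every \(y_{0j}\geq 0\) and \(\sum_j y_{0j}=\rho\), so each \(y_{0j}\leq 1\). The support inequalities and cone constraints on each bounded cell \([b_j,b_{j+1}]\subset[0,B]\) give \(0\leq y_{rj}\leq B^r y_{0j}\). Finally, \(\widetilde a_K\) is bounded by the mass of the cell it is booked in. This establishes compactness, so image equality holds and the if-and-only-if is exact.
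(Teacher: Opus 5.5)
Your argument matches the paper's. Corollary~\ref{cor:scenario-audit} follows from the image equality in Corollary~\ref{cor:gamma-two-socp} by pinning the two linear targets, and your closedness check repeats the compactness step already proved in that corollary's proof. One small tightening: for a feasible point, as opposed to an admissible measure, \(\rho\leq1\) comes from Cauchy--Schwarz, \(\rho^2=(\sum_j y_{1j})^2\leq(\sum_j y_{0j})(\sum_j y_{2j})=\rho\), and not from the normalization alone; the paper gets this by reconstructing a measure from the cell moments.
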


The corollary follows directly from Corollary~\ref{cor:gamma-two-socp}: the identified set equals the image of the feasible region under a linear map, so membership of a candidate pair is equivalent to feasibility with the image coordinates pinned.

Assembling the identified sets from the programs is then direct.  Each direction \(u\) is one SOCP whose solution returns the support value \(H_{\Theta_{q\ell}}(u)\) together with an optimal pair \((q,\ell)\) on the frontier.  The four coordinate directions give the marginal probability and shortfall intervals, and the two depth programs give the depth interval.  For the frontier itself, a finite grid of unit directions produces an inner and an outer polygon that sandwich the set: the convex hull of the optimal pairs is contained in \(\Theta_{q\ell}(\cI)\), while the intersection of the supporting half planes \(\{(q,\ell):u_q q+u_\ell\ell\leq H_{\Theta_{q\ell}}(u)\}\) over the grid contains it.  The gap between the polygons measures what the grid can still miss, so the grid is doubled until the gap falls below a preset tolerance, and the figures plot the outer polygon.  Half planes computed for a coarser quote set remain valid under every finer set, so intersecting the accumulated half planes yields frontiers that are nested by construction.

\subsection{What the right tail identifies}
\label{sec:tail-theory}

The support cap matters because it controls the denominator of the \(\gamma=2\) transform.  Quotes can be dense near the crash threshold and still leave the far right tail weakly constrained, and the finite cap is what prevents a small amount of probability at a very large value of \(X\) from carrying an arbitrarily large second moment.  Let \(\cM_\infty(\cI)\) denote the admissible measures of \eqref{eq:admissible-measures} with \([0,B]\) replaced by \([0,\infty)\) and with finite second moment.

\begin{proposition}[Zero lower bound under a perturbable right tail]
\label{prop:unbounded-tail}
Suppose \(\cM_\infty(\cI)\) contains a measure \(\Q^\circ\) whose option values lie strictly inside every retained quote interval and which has positive mass on bounded open intervals \(U_-\subset(0,1)\) and \(U_+\subset(1,\infty)\).  Then
\begin{equation}
  \sup_{\Q\in\cM_\infty(\cI)}M(\Q)=\infty,
  \qquad
  \inf_{\Q\in\cM_\infty(\cI)}q_{2,K}(\Q)
  =\inf_{\Q\in\cM_\infty(\cI)}\ell_{2,K}(\Q)=0,
  \label{eq:zero-lower-bound}
\end{equation}
and the closure of the joint image \(\{(q_{2,K}(\Q),\ell_{2,K}(\Q)):\Q\in\cM_\infty(\cI)\}\) contains the origin.
\end{proposition}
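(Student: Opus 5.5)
The plan is to build, from the interior measure \(\Q^\circ\), an explicit one-parameter family \(\Q_\varepsilon\in\cM_\infty(\cI)\) that places a vanishing mass at an ever larger payoff while repairing the forward and quote restrictions with mass moved inside \(U_-\) and \(U_+\). Then \(M(\Q_\varepsilon)\to\infty\) while the numerators \(A_2\) and \(KA_2-A_3\) stay bounded. Since \(0\leq A_2(\Q)\leq K^2\) and \(0\leq KA_2-A_3\leq K^3\) for every probability measure, the bounds \(q_{2,K}\leq K^2/M\) and \(\ell_{2,K}\leq K^3/M\) show that any admissible sequence with \(M\to\infty\) drives both coordinates, and hence the joint pair, to the origin. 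So the whole statement reduces to the first claim, \(\sup M=\infty\).

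For the construction, fix a large \(L\) beyond every strike and beyond \(\sup U_+\), and set \(\varepsilon=\eta/L^{3/2}\) for a free scale \(\eta>0\), so that the atom has mass \(\varepsilon\), first moment \(\varepsilon L=\eta L^{-1/2}\to0\), and second moment \(\varepsilon L^2=\eta L^{1/2}\to\infty\). Define
\[
\Q_L=(1-\varepsilon-\delta_-)\Q^\circ+\delta_-\,\nu_-+\varepsilon\,\delta_{L},
\]
where \(\nu_-\) is the normalized restriction of \(\Q^\circ\) to \(U_-\) (a fixed probability measure on \(U_-\) with mean \(m_-<1\)), and \(\delta_-\geq0\) is chosen so that \(\E_{\Q_L}[X]=1\). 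Adding the atom raises the mean by \(\varepsilon(L-1)\approx\eta L^{-1/2}\), and shifting weight \(\delta_-\) from \(\Q^\circ\) to \(\nu_-\) lowers it by \(\delta_-(1-m_-)\), so \(\delta_-=O(L^{-1/2})\to0\). This is a probability measure on \([0,\infty)\) with finite second moment.

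It remains to check the quotes. Each call value is \(\E_{\Q_L}[(X-k_i)^+]=(1-\varepsilon-\delta_-)C_i(\Q^\circ)+\delta_-C_i(\nu_-)+\varepsilon(L-k_i)\). Every term differs from \(C_i(\Q^\circ)\) by \(O(\varepsilon L+\delta_-)=O(L^{-1/2})\), since \((x-k_i)^+\leq x\) bounds \(C_i(\nu_-)\). Because \(\Q^\circ\) prices every quote strictly inside its interval and there are finitely many quotes, all quotes hold for \(L\) large. This is the main point requiring care: the perturbation must be small in every call payoff, which grows linearly, while the second moment grows without bound. The scaling \(\varepsilon\sim L^{-3/2}\) is what separates the two growth rates. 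The interval \(U_+\) is not strictly needed here, though it could serve symmetrically if one prefers to offset the mean by removing mass above one.

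With \(M(\Q_L)\geq \varepsilon L^2\to\infty\) the first claim holds, and the bounds above give \(q_{2,K}(\Q_L),\ell_{2,K}(\Q_L)\to0\). Since both coordinates are nonnegative, the two infima equal zero and the origin lies in the closure of the joint image.
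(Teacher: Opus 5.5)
Your proof is correct and follows the paper's argument. Both use an atom of mass of order $L^{-3/2}$ at $L$ and a mean correction drawing on $U_-$. Strict quote slack absorbs the $O(L^{-1/2})$ option-value changes, and the bounds $q_{2,K}\leq K^2/M$ and $\ell_{2,K}\leq K^3/M$ then finish the argument. The only difference is that you restore the mean by shrinking $\Q^\circ$ and mixing in $\nu_-$, rather than by the paper's additive transfer $+a_H\nu_- - b_H\nu_+$ between $U_-$ and $U_+$; as you note, this means the hypothesis on $U_+$ is not needed.
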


The mechanism is a vanishing tail perturbation.  A mass of order \(H^{-3/2}\) placed at a state \(H\) far in the right tail changes fixed-strike option values by a vanishing amount but raises the second moment at order \(H^{1/2}\).  A small transfer between \(U_-\) and \(U_+\) restores total mass and the forward restriction, and strict quote slack absorbs the remaining option-value changes.  Since \(A_2(\Q)\leq K^2\), the inflated denominator drives \(q_{2,K}\) to zero, and it drives the unconditional shortfall to zero along the same sequence: without a support bound, the quotes cannot exclude physical distributions that assign the crash negligible probability and negligible expected loss.  Conditional depth, the ratio of the two vanishing coordinates, is immune to this channel because \(M(\Q)\) cancels from its ratio, though its set can still respond to right-tail restrictions through the quote and forward constraints.

\begin{remark}[Log utility is the dividing line]
\label{rem:log-dividing-line}
The collapse is not specific to \(\gamma=2\).  For \(\gamma\leq1\), Jensen's inequality and the forward restriction give \(M_\gamma(\Q)\leq1\) for every admissible measure, so mass far in the right tail cannot inflate the denominator of the transform.  For every \(\gamma>1\), with \(\cM_\infty(\cI)\) restricted to measures whose moment of order \(\gamma\) is finite, the construction in the proof applies with mass \(H^{-(1+\gamma)/2}\) at the state \(H\): option values change by \(O(H^{(1-\gamma)/2})\), the moment \(M_\gamma\) grows without bound, and \(q_{\gamma,K}\) and \(\ell_{\gamma,K}\) vanish along the sequence.  The dilution channel opens exactly when the investor is more risk averse than the log investor.
\end{remark}

\section{Data and Design}
\label{sec:empirical}

\subsection{SPX option cross sections}
\label{sec:data}

The data are end-of-day bid and ask quotes on S\&P 500 index options from OptionMetrics, covering January 2013 through August 2023.  We retain quotes with positive strikes, nonnegative bids, positive asks, uncrossed bid--ask intervals, and positive trading volume.  Quotes enter the analysis only through these intervals. The sample uses exact calendar horizons of 23, 30, and 37 days and keeps at most one cross section per horizon in a calendar week, observed on Wednesdays. It contains 1,115 horizon-specific observations covering 537 distinct calendar weeks.  Each cross section carries its own discount factor and forward.

\subsection{Crash event and nested quote sets}
\label{sec:information-sets}

The main event is a terminal decline of at least 10 percent from each cross section's forward, so \(K=0.90\).  Appendix~\ref{app:supplementary} examines the more extreme event \(K=0.85\).  The main support condition \(B=5\) allows the terminal index level to reach five times its forward over a 23--37 day horizon.  It is a loose economic bound, and its only bite is to rule out states whose arbitrarily large market payoff dominates the kernel normalization.  Section~\ref{sec:tail-results} examines tighter and looser caps and the unbounded specification.

The design expands the quote set in four steps.  The first set, \(\cI^{\mathrm{pair}}\), contains the closest retained put strike on each side of \(K\).  The second, \(\cI^{\mathrm{local}}\), adds the six nearest additional put strikes, for eight in total.  The third, \(\cI^{\mathrm{wing}}\), contains the complete retained out-of-the-money (OTM) put wing.  The final set, \(\cI^{\mathrm{OTM}}\), adds all retained OTM calls and so comprises every retained OTM quote.  Thus
\begin{equation}
  \cI^{\mathrm{pair}}
  \subseteq
  \cI^{\mathrm{local}}
  \subseteq
  \cI^{\mathrm{wing}}
  \subseteq
  \cI^{\mathrm{OTM}}.
  \label{eq:nested-information}
\end{equation}
By the monotonicity of Section~\ref{sec:identified-set}, every interval and every support value contracts weakly along this sequence, so the design separates information near the event threshold from information in the rest of the put wing and on the call side.

\subsection{Reported statistics, verification, and inference}
\label{sec:reported-statistics}

For each cross section we solve the programs of Section~\ref{sec:finite-representation} and obtain the closed identified sets.  All results condition on \(\gamma=2\), \(K=0.90\), \(B=5\), and each cross section's fixed discounting and forward.  Summaries are formed separately by horizon: the reported lower endpoint is the median of the lower endpoints across cross sections, and similarly for the upper endpoint.  The main measure of information is the proportional reduction in the width of the crash probability interval.  For nested quote sets \(\mathcal A\subseteq\mathcal B\) with widths \(w^{\mathcal A}>0\) and \(w^{\mathcal B}\), the reduction within a cross section is \(\Delta_{\mathcal A\rightarrow\mathcal B}=1-w^{\mathcal B}/w^{\mathcal A}\), and we report its median across cross sections within each horizon.  Sampling variation of these medians is assessed with a circular block bootstrap on calendar weeks, using eight-week blocks and 2,000 replications stratified by horizon \citep{politisromano1992,sharipovwendler2013}.  Bracketed entries are percentile 95 percent intervals.

Assumption~\ref{ass:regularity} is verifiable one cross section at a time.  For each cross section we construct an admissible measure that prices every retained quote strictly inside its interval and places mass on both sides of the forward away from the strike and threshold boundaries.  The construction succeeds in 1,111 of the 1,115 cross sections, and Table~\ref{tab:regularity} in Appendix~\ref{app:supplementary} reports the counts by horizon.

The all-OTM programs are solved to verified optimality in 1,098 cross sections, and these form the samples of Table~\ref{tab:headline}.  Cross sections in which a program fails this verification are dropped from the affected statistics rather than approximated.  Comparisons across quote sets use the 976 cross sections solved to verified optimality under all four quote sets.  Every statistic therefore conditions on a stated subsample of the 1,115 cross sections, and each table and figure reports its own counts.

\section{What Do the Traded Quotes Reveal?}
\label{sec:marginal-results}

\subsection{Main identified sets}
\label{sec:headline-results}

Table~\ref{tab:headline} reports the main results.  Under all retained OTM quotes, the median identified crash probability interval rises with the horizon, reaching 1.97 to 4.53 percent at 37 days.  Even after every retained OTM quote is used, the upper endpoint remains two to three times the lower endpoint, so the quotes still admit economically different assessments of the same event.

Conditional depth is identified less tightly, with median intervals spanning roughly 2.4 to 7.4 percent of additional decline below the 10 percent threshold.  The contrast between the two coordinates is not specific to this event definition: at \(K=0.85\) the probability intervals are narrower and the depth intervals wider, with the same ordering of quote-set contributions (Appendix~\ref{app:supplementary}).

\begin{table}[t]
\caption{Physical crash risk and the information in SPX option quotes}
\label{tab:headline}
\centering
\small
\begin{threeparttable}
\textit{Panel A: Main identified sets}\\[2pt]
\begin{tabular}{lrrrrr}
\toprule
Horizon & Cross sections & OTM quotes & $q_{2,K}$ & $\ell_{2,K}$ & $s_{2,K}$ \\
\midrule
23 days & 402 & 105.0 & [0.71, 2.11] & [0.047, 0.061] & [2.44, 7.35] \\
30 days & 372 & 144.0 & [1.42, 3.29] & [0.086, 0.105] & [2.81, 7.00] \\
37 days & 324 & 95.5 & [1.97, 4.53] & [0.127, 0.152] & [2.94, 7.44] \\
\bottomrule
\end{tabular}
\vspace{6pt}

\textit{Panel B: Median widths and reductions}\\[2pt]
\begin{tabular}{lrrrrr}
\toprule
Horizon & Put pair & Local eight & Put wing & All OTM &
Total reduction [95\% CI] \\
\midrule
23 days & 7.05 & 1.86 & 1.39 & 1.34 & 79.9 [78.4, 81.7] \\
30 days & 12.14 & 2.87 & 1.91 & 1.82 & 84.4 [82.9, 85.3] \\
37 days & 13.80 & 3.42 & 2.50 & 2.36 & 81.8 [80.3, 83.0] \\
\bottomrule
\end{tabular}
\vspace{6pt}

\textit{Panel C: Marginal source of probability identification}\\[2pt]
\begin{tabular}{lrrr}
\toprule
Added quote layer & 23 days & 30 days & 37 days \\
\midrule
Pair to local eight
& 69.6 [66.1, 72.0]
& 72.2 [69.1, 74.8]
& 72.5 [70.2, 75.0] \\
Local eight to put wing
& 28.0 [25.9, 31.0]
& 35.2 [32.7, 36.9]
& 26.9 [25.3, 28.4] \\
Put wing to all OTM
& 2.6 [2.2, 3.3]
& 4.2 [3.3, 5.2]
& 5.0 [4.1, 6.4] \\
\bottomrule
\end{tabular}
\begin{tablenotes}[flushleft]
\footnotesize
\item Notes: One S\&P 500 cross section per week and horizon, January 2013 through August 2023.  All results use \(\gamma=2\), \(K=0.90\), \(B=5\), and each cross section's fixed discounting and forward.  Panel A reports the number of cross sections, median retained OTM quote counts, and medians of interval endpoints across cross sections, in percent.  The \(s_{2,K}\) medians use 402, 371, and 324 cross sections.  Panel B reports median interval widths for the crash probability, in percentage points, and the median proportional reduction from the put pair to all OTM quotes, in percent. Panel C reports median proportional reductions relative to the preceding quote set, in percent.  Brackets are 95 percent calendar-week block bootstrap intervals.
\end{tablenotes}
\end{threeparttable}
\end{table}

Panel B of Table~\ref{tab:headline} shows how much ambiguity remains under each quote set.  All retained OTM quotes cut the put-pair widths by a median of 80 to 84 percent across horizons, and the bootstrap intervals are narrow relative to these magnitudes.

\subsection{Physical crash risk over time}
\label{sec:time-results}

Figure~\ref{fig:time-series} plots the all-OTM probability intervals by horizon.  They rise in every major stress episode of the sample.  The width often rises with the level: both endpoints move sharply upward around the COVID-19 episode at every horizon, so stress raises the implied crash risk and, at the same time, the range of crash probabilities the quotes admit.

\begin{figure}[t]
\centering
\includegraphics[width=\textwidth]{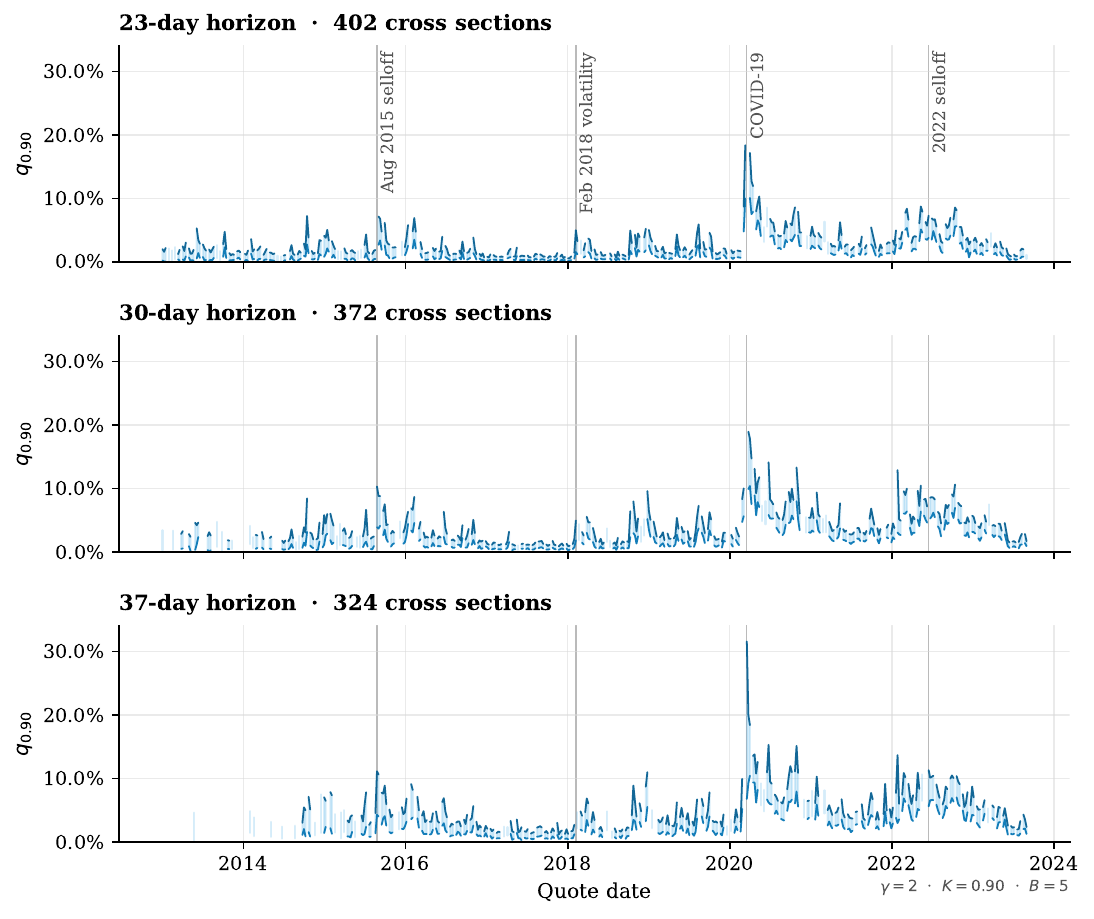}
\caption{Time variation in quote-compatible physical crash probabilities. Shaded bands are weekly identified intervals for the physical probability of \(X\leq0.90\), in percent, under the benchmark kernel and all retained OTM quotes. The cross sections cover exact horizons of 23, 30, and 37 days with 402, 372, and 324 cross sections.  Missing weeks appear as gaps.  Results use \(\gamma=2\) and \(B=5\).}
\label{fig:time-series}
\end{figure}

\subsection{Where the identifying information comes from}
\label{sec:information-results}

Figure~\ref{fig:identification-gains} separates the level of the remaining ambiguity from the contribution of each quote layer.  The largest change occurs when the six additional nearby puts join the put pair, the rest of the put wing remains informative, and OTM calls add little beyond the complete put wing (Table~\ref{tab:headline}, Panel C).  The identification of a 10 percent decline therefore comes mainly from put prices, with the strongest contribution near the threshold.

These are marginal contributions along the nested sequence, and a reverse exercise confirms the same concentration: deleting the far put wing from the full quote set widens the median interval, deleting the OTM calls barely moves it, and deleting the six nearby puts leaves the median width unchanged, because neighboring wing strikes substitute for them.  Finally, the quoted intervals themselves anchor the identification: doubling every bid--ask spread moves the endpoints an order of magnitude more than retaining only every other strike.  Appendix~\ref{app:supplementary} reports the magnitudes.

\begin{figure}[t]
\centering
\includegraphics[width=\textwidth]{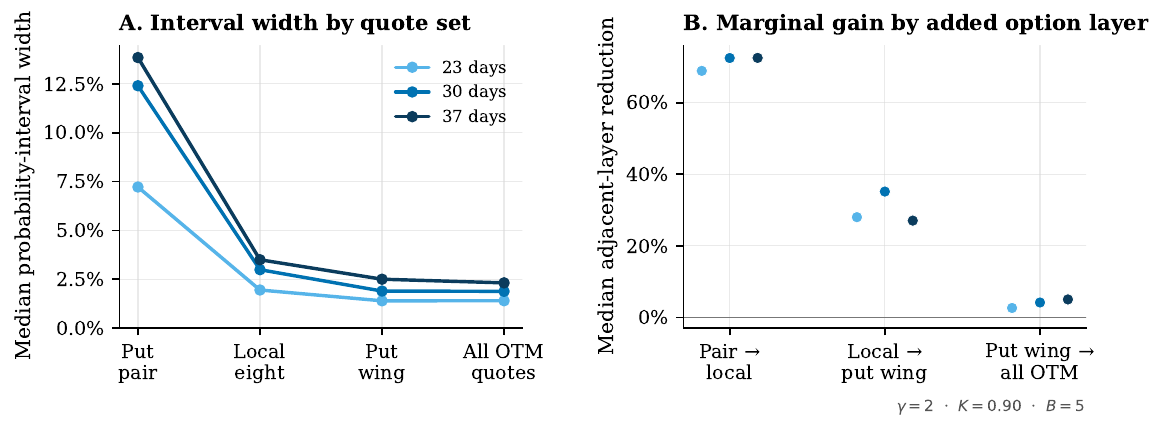}
\caption{Identification gains across nested quote sets. Panel A reports the median width of the identified crash probability interval under each quote set and horizon, in percentage points.  Panel B reports the median proportional reduction relative to the preceding quote set, in percent.  Results use \(\gamma=2\), \(K=0.90\), and \(B=5\).}
\label{fig:identification-gains}
\end{figure}

\subsection{The unobserved right tail}
\label{sec:tail-results}

The main specification places \(X\) in \([0,5]\). Figure~\ref{fig:tail-audit} compares that choice with tighter and looser support conditions on a fixed robustness subsample of 180 cross sections, 60 per horizon.  The subsample is spaced evenly over the sample period and includes cross sections adjacent to the August 2015, February 2018, March 2020, and June 2022 stress dates.  For finite caps between two and ten, the results are nearly unchanged.

Removing the cap acts on the probability bound alone.  Without a support cap, the lower endpoint of the closed interval for \(q_{2,K}\) is zero in every subsample cross section, as Proposition~\ref{prop:unbounded-tail} predicts (Appendix~\ref{app:proofs} states the unbounded variant of the program). The conditional depth interval has no global-moment denominator, and its width barely moves.

\begin{figure}[t]
\centering
\includegraphics[width=\textwidth]{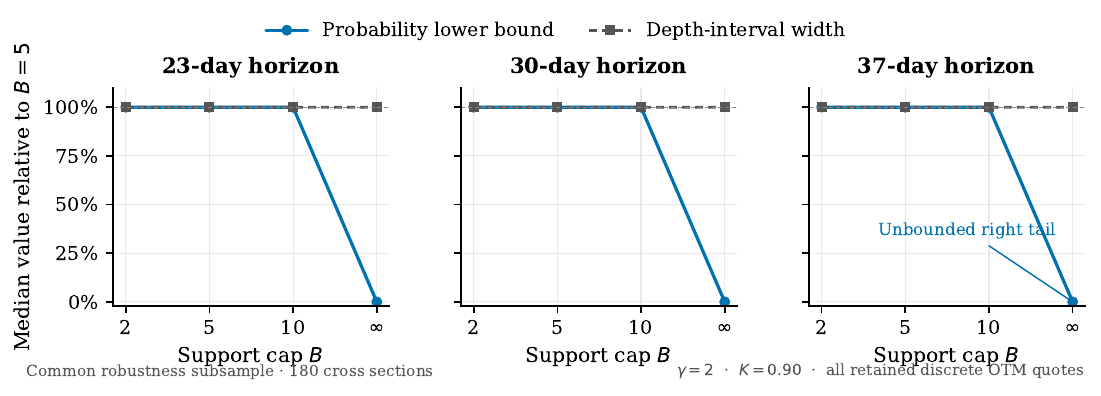}
\caption{Tail dependence of crash probability and conditional depth. The robustness subsample contains 180 cross sections, 60 at each horizon.  The blue series reports the median ratio \(q_{2,K}^-(B)/q_{2,K}^-(5)\), where \(q_{2,K}^-(B)\) is the probability lower endpoint under support cap \(B\).  The gray dashed series reports the median ratio of conditional depth interval widths.  Finite caps use all retained OTM quotes.  Moving from \(B=2\) to \(B=10\) lowers a cross section's probability lower bound by a median of 0.0008 percentage points and by 0.0053 at the 95th percentile.  The unbounded point reports the optimal values of the unbounded program: the probability lower endpoint is zero, and the largest depth width ratio relative to \(B=5\) is 1.0002. Results use \(\gamma=2\) and \(K=0.90\).}
\label{fig:tail-audit}
\end{figure}

\FloatBarrier

\section{Jointly Attainable Crash Scenarios}
\label{sec:joint-results}

The four marginal endpoints are generally driven by four different measures, so a probability and shortfall pair drawn from the marginal rectangle may describe no single admissible distribution.  This section examines the physical crash frontiers, first for selected cross sections and then for the full sample. Each plotted frontier follows the construction of Section~\ref{sec:finite-representation}, with the direction grid refined until the envelope is accurate to within one percent of the cross section's put-pair scale (Appendix~\ref{app:supplementary}).

Figure~\ref{fig:market-states} covers four prespecified 30-day market states, each matched to the closest admissible cross section: the 2015 selloff, a calm market in late 2019, the COVID-19 crash, and the 2022 selloff.  The identified sets contract in the same order in every cross section.  Two puts leave a large range of probability--shortfall combinations, nearby puts remove most of that range, and the complete put wing and the OTM calls leave a much smaller set.  The location of the surviving set tracks market conditions.  The calm-market set lies close to the origin, and the 2015 and COVID-19 sets allow larger probability and shortfall values.

\begin{figure}[t]
\centering
\includegraphics[width=\textwidth]{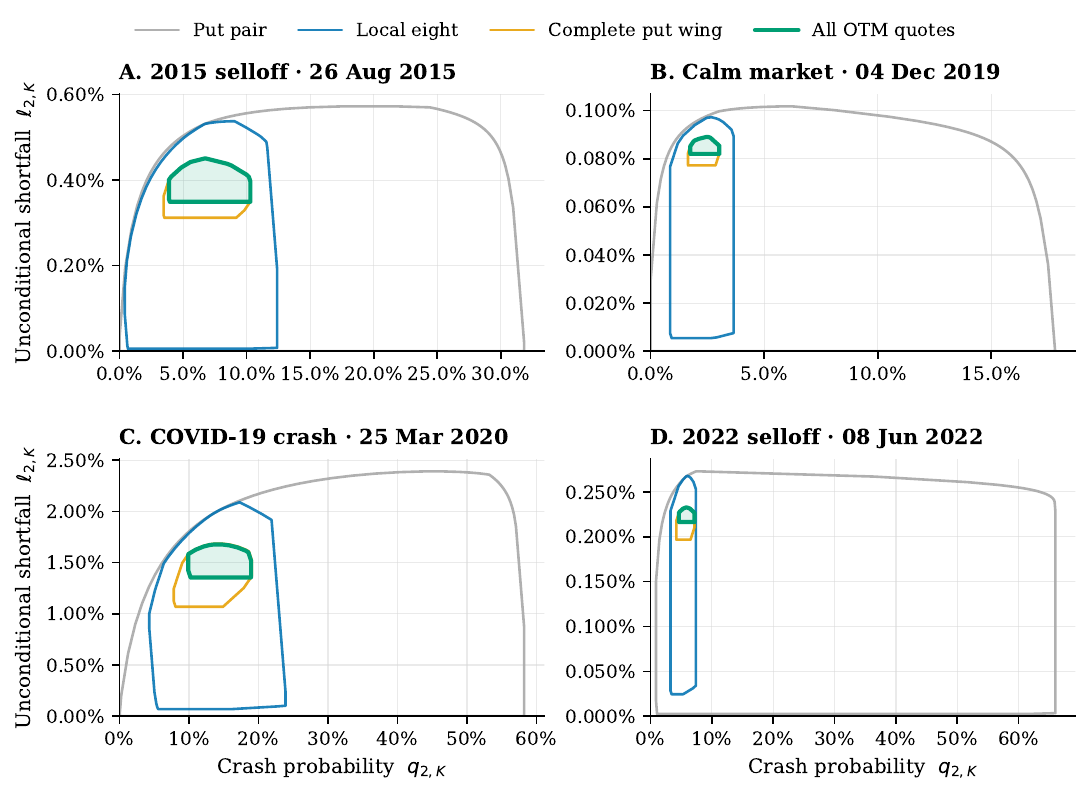}
\caption{Joint probability--shortfall sets across market states. Each cross section overlays the supporting half-plane envelopes of \(\Theta_{q\ell}(\cI)\) for the put pair, local eight, complete put wing, and all OTM quotes.  Each successive envelope intersects all half planes accumulated through that layer, so the sets are nested by construction. Probability and unconditional shortfall are in percent, at a 30-day horizon with \(\gamma=2\), \(K=0.90\), and \(B=5\).  The cross section dates are 26 August 2015, 4 December 2019, 25 March 2020, and 8 June 2022.  The COVID-19 and 2022 targets were 18 March 2020 and 15 June 2022.  Each cross section uses its own axes.}
\label{fig:market-states}
\end{figure}

Figure~\ref{fig:covid-horizons} repeats the exercise across maturities around the COVID-19 crash.  The 37-day set reaches the largest shortfall values among the three and has a broad probability range under sparse quotes.  Additional puts contract the feasible region at every horizon, and the remaining all-OTM sets still differ in level and shape.

\begin{figure}[t]
\centering
\includegraphics[width=\textwidth]{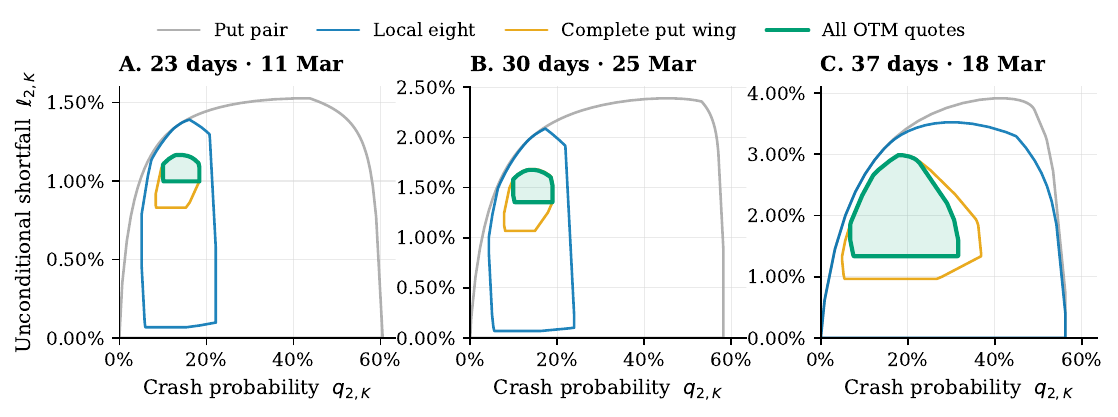}
\caption{Joint probability--shortfall sets near the COVID-19 crash. Each cross section overlays the supporting half-plane envelopes for the put pair, local eight, complete put wing, and all OTM quotes at one horizon. Probability and unconditional shortfall are in percent.  The panels use the closest admissible 23-, 30-, and 37-day observations around the 18 March 2020 target, dated 11 March, 25 March, and 18 March 2020. Results use \(\gamma=2\), \(K=0.90\), and \(B=5\).  The vertical scales differ across cross sections.}
\label{fig:covid-horizons}
\end{figure}

Across the 976 cross sections shared by the four quote sets, the joint restriction is strongest where shortfall is stressed.  In the two directions that place positive weight on shortfall, the horizon-level median gaps between the all-OTM joint set and its marginal rectangle run from 2.7 to 4.0 percent of the put-pair scale (Appendix~\ref{app:joint-directions}).  The same directional statistics show where each quote layer acts: nearby puts deliver most of the contraction in high-probability directions, while the rest of the put wing dominates in low-probability directions.

\FloatBarrier

\section{Conclusion}
\label{sec:conclusion}

Finite option quotes are informative about short-horizon physical crash risk. Mapping every quote-compatible distribution through the power utility investor's pricing kernel separates what traded prices imply from what interpolation and tail assumptions add.  In the S\&P 500 cross sections studied here, most of the width of the put-pair probability interval disappears once all retained OTM quotes are used, with the largest contribution from puts near the threshold, and the sets rise and widen together during market stress.

Under the \(\gamma=2\) benchmark, crash probability depends on a global second moment that traded quotes leave unrestricted far in the tail, so removing the support cap drives the probability lower bound to zero, an exposure that opens precisely when risk aversion exceeds that of the log investor (Remark~\ref{rem:log-dividing-line}). The denominator cancels from conditional depth, and its interval is stable across the support conditions we examine.  A positive lower bound on crash probability is therefore a joint statement about traded prices and the maintained support condition.

The physical crash frontier turns this evidence into a discipline for scenario analysis.  It retains only the probability and shortfall combinations that a single admissible distribution can generate or approach, and its exclusions are strongest where scenarios stress shortfall.  Under the maintained benchmark, event, and support conditions, any estimate that completes the surface while pricing the same quotes within their spreads must imply physical crash risk inside the frontier.

\clearpage
\bibliographystyle{plainnat}
\bibliography{references}

\appendix

\section{Proofs and Complete Programs}
\label{app:proofs}

\subsection{Positive-return and threshold closure}
\label{app:positive-closure}

\begin{lemma}[Approximation away from zero]
\label{lem:positive-return-closure}
Under Assumption~\ref{ass:regularity}, every \(\Q\in\cM(\cI)\) is the weak limit of quote-compatible measures supported on \((0,B]\).  For every fixed \(r>0\), their global and truncated \(r\)th moments converge to those of \(\Q\).
\end{lemma}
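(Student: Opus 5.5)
The plan is to build the approximating measures by convex mixing with the interior measure \(\Q^\circ\) of Assumption~\ref{ass:regularity}, and then to push any mass at zero a short distance to the right. Fix \(\Q\in\cM(\cI)\). For \(\varepsilon\in(0,1)\), first set \(\Q_\varepsilon=(1-\varepsilon)\Q+\varepsilon\Q^\circ\). The mass, forward and quote restrictions are linear in the measure, so \(\Q_\varepsilon\) is admissible. Each quote constraint is also satisfied with slack of at least \(\varepsilon\,\delta\), where \(\delta>0\) is the minimum slack of \(\Q^\circ\) across the finitely many retained quotes. Moreover, \(\Q_\varepsilon\) charges \(U_-\) and \(U_+\) with mass at least \(\varepsilon\,\Q^\circ(U_\pm)\).

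Second, remove the atom at zero. Let \(m=\Q_\varepsilon\{0\}\le 1\), and move this mass to the point \(\eta>0\), with \(\eta<\min\{K,k_{\min}\}\) where \(k_{\min}\) is the smallest positive strike. This raises the forward by \(m\eta\) and changes every call payoff \((X-k_i)^+\) by at most \(m\eta\), since the payoffs are \(1\)-Lipschitz. To restore the forward exactly, transfer a small amount of mass \(\tau\) from \(U_+\) to \(U_-\). For instance, scale down \(\Q_\varepsilon\) restricted to \(U_+\) and add the removed mass proportionally on \(U_-\). Because \(U_+\subset(1,B)\) and \(U_-\subset(0,1)\) are bounded away from each other, this lowers the mean by at least \(c\tau\) for a constant \(c>0\). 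Mass is preserved, and the shift in the mean is continuous and monotone in \(\tau\), so we can choose \(\tau=O(\eta)\) to match \(m\eta\) exactly. This transfer also moves each option value by \(O(\tau)\), again by the Lipschitz property. Taking \(\eta=\eta(\varepsilon)=o(\varepsilon)\) keeps all quote changes below the slack \(\varepsilon\delta\). Since \(U_-\subset(0,1)\) and \(\eta>0\), the resulting measure \(\Q_{\varepsilon}'\) is supported on \((0,B]\), quote-compatible, and has unit forward.

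Third, prove convergence as \(\varepsilon\to0\). The distance between \(\Q_\varepsilon'\) and \(\Q\) is at most \(\varepsilon\) in total variation (from the mixing) plus \(O(\eta+\tau)\) of mass moved by bounded amounts. For bounded continuous test functions this gives weak convergence directly. For the moments, \(x^r\) is bounded and continuous on \([0,B]\), so the global moments converge. The truncated moments \(\E[X^r\one\{X\le K\}]\) require care, because the indicator is discontinuous at \(K\).

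This is the step I expect to be the main obstacle. I would not rely on weak convergence here but bound the truncated moments directly. The mixing step changes \(A_r\) by at most \(\varepsilon B^r\). Moving the atom from \(0\) to \(\eta<K\) keeps it inside the event and changes its contribution from \(0\) to \(m\eta^r\to0\); here \(r>0\) is used. The transfer between \(U_\pm\), which are disjoint from the threshold boundary, changes \(A_r\) by \(O(\tau)\). No mass crosses \(K\) at any stage, so \(A_r(\Q_\varepsilon')\to A_r(\Q)\). Choosing \(\varepsilon=1/n\) then yields the required sequence.
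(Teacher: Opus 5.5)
Your proposal is correct and follows essentially the same route as the paper: mix \(\Q\) with \(\Q^\circ\) at a vanishing weight, shift the atom at zero by a distance of smaller order than that weight, restore the forward with a \(U_+\to U_-\) transfer, and let the inherited quote slack absorb the resulting payoff changes. Your mass-by-mass accounting of the truncated moments is a slightly more explicit version of the paper's closing convergence step. It relies on the fact that the only macroscopic mass moved is the zero atom, which stays below \(K\).
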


\begin{proof}
Form the mixture \((1-\delta_\varepsilon)\Q+\delta_\varepsilon\Q^\circ\), where \(\delta_\varepsilon\to0\) and \(\varepsilon/\delta_\varepsilon\to0\). Move all mass at zero in the mixture to \(\varepsilon>0\).  This preserves total mass and changes the mean and each retained option value by \(O(\varepsilon)\).  A transfer of order \(O(\varepsilon)\) between \(U_+\) and \(U_-\) restores the mean.  The mass available in these intervals is of order \(\delta_\varepsilon\), so the transfer preserves nonnegativity for small \(\varepsilon\).  Strict quote slack inherited from the mixture absorbs the remaining payoff changes.

The resulting measures have strictly positive support.  On \([0,B]\), the functions \(x^r\) are uniformly continuous for every \(r>0\), and the shifted zero-state contribution is \(O(\varepsilon^r)\).  The global and truncated moments therefore converge.
\end{proof}

\subsection{Proof of Theorem~\ref{thm:sharp-set}}

\begin{proof}[Convexity and compactness of the joint set]
For an admissible measure \(\Q_j\), let \(d_j=M_\gamma(\Q_j)>0\), and set
\[
  \boldsymbol n_j
  =
  \begin{pmatrix}
    A_{\gamma}(\Q_j)\\
    K A_{\gamma}(\Q_j)-A_{\gamma+1}(\Q_j)
  \end{pmatrix},
\]
so that the probability--shortfall image of \(\Q_j\) is \(\boldsymbol w_j=\boldsymbol n_j/d_j\).  Fix \(\theta\in[0,1]\) and define
\[
  \alpha
  =
  \frac{\theta d_1}
       {\theta d_1+(1-\theta)d_0}.
\]
The mixture \(\Q_\alpha=\alpha\Q_0+(1-\alpha)\Q_1\) is admissible because the quote and moment restrictions are linear in the measure, and
\[
  \frac{\alpha\boldsymbol n_0+(1-\alpha)\boldsymbol n_1}
       {\alpha d_0+(1-\alpha)d_1}
  =
  \theta\boldsymbol w_0+(1-\theta)\boldsymbol w_1.
\]
The attainable image is therefore convex, and its closure is convex and nonempty.  Since \(\ell_{\gamma,K}=q_{\gamma,K}s_{\gamma,K}\) with \(s_{\gamma,K}\in[0,K]\) whenever \(q_{\gamma,K}>0\), the closure is contained in \(\{(q,\ell):0\leq q\leq1,\ 0\leq\ell\leq Kq\}\).  It is closed and bounded, hence compact.
\end{proof}

\begin{proof}[The conditional depth interval]
Let \(\cM_+=\{\Q\in\cM(\cI):A_\gamma(\Q)>0\}\), a convex set.  For \(\Q_0,\Q_1\in\cM_+\) and \(\alpha\in[0,1]\), the mixture \(\Q_\alpha=\alpha\Q_0+(1-\alpha)\Q_1\) satisfies
\[
  s_{\gamma,K}(\Q_\alpha)
  =
  \frac{
    \alpha A_{\gamma}(\Q_0)s_{\gamma,K}(\Q_0)
    +(1-\alpha)A_{\gamma}(\Q_1)s_{\gamma,K}(\Q_1)
  }{
    \alpha A_{\gamma}(\Q_0)
    +(1-\alpha)A_{\gamma}(\Q_1)
  }.
\]
As \(\alpha\) varies, this expression spans the interval between \(s_{\gamma,K}(\Q_0)\) and \(s_{\gamma,K}(\Q_1)\).  The image is therefore an interval contained in \([0,K]\), and its closure is a nonempty compact interval, which is \(\Theta^{\gamma}_{s}(\cI)\).
\end{proof}

\subsection{Threshold bookkeeping and the depth programs}
\label{app:complete-programs}

Let \(h\) index the cell beginning at \(K\).  In the program of Section~\ref{sec:finite-representation}, the moment conditions on this cell apply to the residual moments
\[
  \overline y_{0h}=y_{0h}-\widetilde a_K,\qquad
  \overline y_{1h}=y_{1h}-K\widetilde a_K,\qquad
  \overline y_{2h}=y_{2h}-K^2\widetilde a_K,
\]
which satisfy \eqref{eq:order-two-cone}--\eqref{eq:order-two-support} with \(a=K\) and \(b=b_{h+1}\), together with \(\widetilde a_K\geq0\). The atom is part of \(y_{rh}\), so it enters the global constraints \eqref{eq:global-moment-constraints} and the quote values \eqref{eq:scaled-quote-constraints} without further adjustment, while the targets \eqref{eq:cell-targets} count it as event mass.  An empty sum in \eqref{eq:scaled-quote-constraints} is interpreted as zero when \(k_i\geq B\).  When the support cap is removed, the terminal cell becomes a half line and keeps the positive semidefinite moment condition and the lower location inequality, while the upper support inequalities are dropped.

Conditional depth uses a second normalization.  When \(A_2(\Q)>0\), define \(\lambda=A_2(\Q)^{-1}\) and \(d\widehat{\Q}=\lambda\,d\Q\).  The mass, forward, and truncated second-moment restrictions become
\begin{equation}
  \int d\widehat{\Q}=\lambda,\qquad
  \int X\,d\widehat{\Q}=\lambda,\qquad
  \int_{[0,K]}X^2\,d\widehat{\Q}=1,
  \label{eq:depth-normalization}
\end{equation}
each quote band is multiplied by \(\lambda\), and the objective is linear:
\begin{equation}
  s_{2,K}=\int_{[0,K]}(K-X)X^2\,d\widehat{\Q}.
  \label{eq:depth-scaled}
\end{equation}
With cell moments \(\widehat y_{rj}=\int x^r\,d\widehat\mu_j\) taken from a local decomposition of \(\widehat{\Q}\), and with \(\widehat a_K\geq0\) the scaled event atom, the finite constraints are
\begin{equation}
  \sum_j\widehat y_{0j}=\lambda,\qquad
  \sum_j\widehat y_{1j}=\lambda,\qquad
  \sum_{j\in\mathcal J_-}\widehat y_{2j}
  +K^2\widehat a_K=1,
  \label{eq:depth-global-constraints}
\end{equation}
\begin{equation}
  \widehat C_i(\widehat{\boldsymbol y})
  =
  \sum_{j:b_j\geq k_i}
  (\widehat y_{1j}-k_i\widehat y_{0j}),
  \qquad
  \lambda\underline c_i
  \leq\widehat C_i(\widehat{\boldsymbol y})
  \leq\lambda\overline c_i,
  \label{eq:depth-quote-constraints}
\end{equation}
with the same cellwise moment cones and threshold treatment as before, and the linear objective
\begin{equation}
  s_{2,K}
  =
  \sum_{j\in\mathcal J_-}
  (K\widehat y_{2j}-\widehat y_{3j}).
  \label{eq:depth-cell-objective}
\end{equation}
No global second-moment normalization is imposed.  Unlike \(\rho\), the scalar \(\lambda\) has no upper bound, so the depth programs need not attain their optimal values.  The argument below shows that those values equal the endpoints of the closed depth interval.

\subsection{Proof of Corollary~\ref{cor:gamma-two-socp}}
\label{app:exactness}

\begin{proof}
The feasible region is compact.  It is closed, being cut out by finitely many linear equalities and inequalities and by second-order cone constraints.  For boundedness, fix a feasible point.  The cellwise moment conditions are sufficient, so the cell moments, net of the designated atom on the threshold cell, are moments of nonnegative measures on the closed cells.  Summing the cells and adding the atom yields a nonnegative measure on \([0,B]\) with total mass \(\rho\), first moment \(\rho\), and second moment one.  The Cauchy--Schwarz inequality gives \(\rho^2\leq\rho\cdot1\), so \(\rho\leq1\), and the bounded support gives \(0\leq y_{rj}\leq B^r y_{0j}\leq B^r\) for every stored order together with \(0\leq\widetilde a_K\leq y_{0h}\leq1\). Every coordinate of a feasible point is therefore bounded.

We now show two inclusions: the attainable image maps into the feasible region, and every feasible point is the limit of attainable pairs. Together with the compactness just established and the linearity of the targets, these give the equality of \(\Theta_{q\ell}(\cI)\) and the conic image, and the support values follow by linear optimization over the same set.

First take any admissible \(\Q\) and set \(\rho=M(\Q)^{-1}\in[1/B,1]\) and \(d\widetilde{\Q}=\rho\,d\Q\).  Decompose \(\widetilde{\Q}\) into local components, assigning the atom at \(K\), if any, to the cell beginning at \(K\), and let \(\widetilde a_K\) be that scaled atom.  The resulting \((\boldsymbol y,\rho,\widetilde a_K)\) satisfies every constraint: the global equalities are \eqref{eq:cc-normalization}, the quote inequalities are the scaled quote bands, the cells below \(K\) carry genuine moment sequences of measures on closed intervals, and the threshold cell satisfies its conditions net of the designated atom.  The targets \eqref{eq:cell-targets} equal \((q_{2,K}(\Q),\ell_{2,K}(\Q))\) by \eqref{eq:general-threshold-atom}.

Conversely, take any feasible \((\boldsymbol y,\rho,\widetilde a_K)\). The truncated Hausdorff conditions supply, cell by cell, a nonnegative measure on the closed cell with the prescribed moments.  Adding the designated atom \(\widetilde a_K\) at \(K\) and summing the cells produces a nonnegative measure \(\widetilde{\Q}'\) that satisfies the mass, forward, second-moment, and quote restrictions, since each restriction is an integral of a continuous function against the cell moments.  Where a reconstructed cell measure places mass at a shared endpoint \(b_j\neq K\), the location is unambiguous relative to the threshold, so neither the constraints nor the targets are affected by which cell carries it.  The support inequalities give \(1=\sum_j y_{2j}\leq B\sum_j y_{1j}=B\rho\), so \(\rho\geq1/B>0\), and the measure \(d\Q'=d\widetilde{\Q}'/\rho\) has mass one, mean one, second moment \(1/\rho\), and satisfies every quote interval.

One discrepancy can remain: the reconstruction of the cell beginning at \(K\) may itself place mass at \(K\) beyond the designated atom, whereas the targets \eqref{eq:cell-targets} treat that residual mass as lying above the threshold.  Move the residual mass to \(K+\varepsilon\) and correct the measure with the mixing construction of Lemma~\ref{lem:positive-return-closure}: blend in weight \(\delta_\varepsilon\) of the interior measure \(\Q^\circ\), transfer order \(\varepsilon\) between \(U_-\) and \(U_+\) to restore the mean, and use the inherited strict quote slack to keep every quote value inside its interval for small \(\varepsilon\).  The resulting measures \(\Q_\varepsilon\) are admissible, and their probability and shortfall converge to the targets of the feasible point.  Every feasible point is therefore attained or approached, and the conic image, which is compact because the feasible region is compact and the map is linear, equals the closed set \(\Theta_{q\ell}(\cI)\).

For the depth programs, let \(S\) denote the set of attainable conditional depths and \(V\) the set of objective values of feasible points.  Any admissible \(\Q\) with \(A_2(\Q)>0\) produces a feasible pair \((\widehat{\boldsymbol y},\lambda)\) with \(\lambda=A_2(\Q)^{-1}\) whose objective value is \(s_{2,K}(\Q)\), so \(S\subseteq V\).  Conversely, every feasible point carries a finite scalar \(\lambda\), which the unit constraint in \eqref{eq:depth-global-constraints} forces to be positive.  If its threshold-cell reconstruction carries no residual mass at \(K\), dividing \(\widehat{\Q}'\) by \(\lambda\) yields an admissible measure with \(A_2=1/\lambda\) whose conditional depth equals the objective \eqref{eq:depth-cell-objective}.  In general, the same \(K+\varepsilon\) construction yields admissible measures whose depths converge to the objective value.  Every value in \(V\) is therefore a limit of values in \(S\), and
\[
  S\subseteq V\subseteq\cl S=\Theta_s(\cI).
\]
Suprema and infima agree across this chain, so the optimal values of the two conic programs equal the endpoints of the compact interval \(\Theta_s(\cI)\).  These optimal values need not be attained by feasible points: when \(A_2\downarrow0\) along a sequence of admissible measures, \(\lambda\) diverges, and no feasible point need exist whose objective value equals the endpoint.
\end{proof}

\subsection{Proof of Proposition~\ref{prop:unbounded-tail}}

\begin{proof}
Let \(\nu_-\) and \(\nu_+\) be the conditional distributions of \(\Q^\circ\) on \(U_-\) and \(U_+\), with means \(\mu_-<1<\mu_+\).  For \(H\) above the retained strikes and \(U_+\), set
\[
  \varepsilon_H=H^{-3/2},\qquad
  a_H=\varepsilon_H\frac{H-\mu_+}{\mu_+-\mu_-},\qquad
  b_H=\varepsilon_H\frac{H-\mu_-}{\mu_+-\mu_-}.
\]
Let \(\delta_H\) denote unit mass at \(H\), and define
\[
  \Q_H
  =
  \Q^\circ+\varepsilon_H\delta_H+a_H\nu_- - b_H\nu_+.
\]
The coefficients satisfy \(\varepsilon_H+a_H-b_H=0\) and \(\varepsilon_HH+a_H\mu_- - b_H\mu_+=0\), so \(\Q_H\) has mass one and mean one.  Since \(b_H\to0\) and \(\Q^\circ(U_+)>0\), the measure is nonnegative for all sufficiently large \(H\).

For a retained call payoff \(h_i(x)=(x-k_i)^+\),
\[
  \E_{\Q_H}[h_i(X)]-\E_{\Q^\circ}[h_i(X)]
  =
  \varepsilon_H h_i(H)
  +a_H\E_{\nu_-}[h_i(X)]
  -b_H\E_{\nu_+}[h_i(X)]
  =O(H^{-1/2}).
\]
There are finitely many quotes, so strict interior slack implies that \(\Q_H\) satisfies every quote interval for all sufficiently large \(H\).  The intervals \(U_-\) and \(U_+\) are bounded, so
\[
  M(\Q_H)
  =
  M(\Q^\circ)+\varepsilon_HH^2+O(H^{-1/2})
  =
  M(\Q^\circ)+H^{1/2}+O(H^{-1/2})
  \longrightarrow\infty.
\]
Finally, \(A_2(\Q_H)\leq K^2\), and since \(K-X\leq K\) on the event, also \(KA_2(\Q_H)-A_3(\Q_H)\leq KA_2(\Q_H)\leq K^3\).  Hence
\[
  0\leq q_{2,K}(\Q_H)
  \leq\frac{K^2}{M(\Q_H)}
  \longrightarrow0,
  \qquad
  0\leq \ell_{2,K}(\Q_H)
  \leq\frac{K^3}{M(\Q_H)}
  \longrightarrow0,
\]
so the probability--shortfall pair of \(\Q_H\) converges to the origin, which therefore lies in the closure of the joint image.
\end{proof}

\section{The Log-Investor Benchmark}
\label{app:gamma-one}

\subsection{Mathematical specialization}

When \(\gamma=1\), the forward restriction fixes the denominator of the transform \eqref{eq:physical-transform} at \(M_1(\Q)=\E_\Q[X]=1\), so the dilution channel of Proposition~\ref{prop:unbounded-tail} is absent and the tail acts only through the quote and forward restrictions (Remark~\ref{rem:log-dividing-line}).  The probability and unconditional shortfall are therefore linear in the unscaled measure,
\begin{equation}
  q_{1,K}(\Q)=A_{1}(\Q),\qquad
  \ell_{1,K}(\Q)=K A_{1}(\Q)-A_{2}(\Q),
  \label{eq:gamma-one-linear-targets}
\end{equation}
and, when \(A_{1}(\Q)>0\), conditional depth is \(s_{1,K}(\Q)=K-A_{2}(\Q)/A_{1}(\Q)\).  No change of scale is needed for the joint set.  Use the same cells as Section~\ref{sec:finite-representation} and a local decomposition \(\Q=\sum_j\mu_j\), with strike-endpoint atoms assigned to either adjacent cell and any atom at \(K\) assigned to the cell beginning at \(K\). For each \(\mu_j\), define the cell mass \(z_j\) and first moment \(m_j\), together with the second moment \(v_j\) on the cells whose interiors lie below \(K\).  The global restrictions are
\begin{equation}
  \sum_j z_j=1,\qquad
  \sum_j m_j=1.
  \label{eq:gamma-one-global}
\end{equation}
Cells below the threshold carry the complete order-two conditions \eqref{eq:order-two-cone}--\eqref{eq:order-two-support} in \((z_j,m_j,v_j)\).  Cells at or above \(K\) need only \(az_j\leq m_j\leq bz_j\).  With \(h\) the cell beginning at \(K\) and \(a_K\) the threshold mass assigned to the closed event,
\begin{equation}
  0\leq a_K\leq z_h,\qquad
  m_h\leq b_{h+1}z_h-(b_{h+1}-K)a_K,
  \label{eq:gamma-one-threshold-atom}
\end{equation}
each quote value is linear as in \eqref{eq:scaled-quote-constraints} with \((z_j,m_j)\) in place of \((y_{0j},y_{1j})\) and unscaled bands, and the targets are
\begin{equation}
  q_{1,K}
  =\sum_{j<h}m_j+Ka_K,\qquad
  \ell_{1,K}
  =\sum_{j<h}(Km_j-v_j).
  \label{eq:gamma-one-cell-targets}
\end{equation}
Under Assumption~\ref{ass:regularity}, the same closure argument as in Appendix~\ref{app:exactness} identifies this feasible region with the closed log-investor joint set.  Conditional depth again uses a second normalization: with \(\lambda=A_{1}(\Q)^{-1}\) and \(d\widehat\Q=\lambda\,d\Q\), the mass and forward integrals both equal \(\lambda\), the unit constraint is \(\int_{[0,K]}X\,d\widehat\Q=1\), and the objective \(\int_{[0,K]}(K-X)X\,d\widehat\Q\) is linear.  The scaled quote bands, cell moment conditions, and threshold-atom treatment carry over from Appendix~\ref{app:complete-programs}, with first and second moments in the roles that second and third moments play there.

\subsection{Comparison with the \(\gamma=2\) kernel}
\label{app:kernel-comparison}

\begin{proposition}[Kernel comparison]
\label{prop:kernel-comparison}
Fix \(K\in(0,1)\) and consider a common quote-compatible domain of unit-mean measures with finite second moments.  For every \(\Q\) in this domain with positive truncated moments,
\begin{equation}
  q_{2,K}(\Q)\leq Kq_{1,K}(\Q)\leq q_{1,K}(\Q),
  \qquad
  \ell_{2,K}(\Q)\leq K\ell_{1,K}(\Q),
  \qquad
  s_{2,K}(\Q)\leq s_{1,K}(\Q).
  \label{eq:gamma-comparison}
\end{equation}
\end{proposition}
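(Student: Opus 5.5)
The plan is to compare the two transforms state by state, using two facts. First, for \(\gamma=1\) the denominator is \(M_1(\Q)=\E_\Q[X]=1\). Second, for \(\gamma=2\) Jensen's inequality and the forward restriction give \(M(\Q)=\E_\Q[X^2]\geq(\E_\Q[X])^2=1\). With these, each inequality in \eqref{eq:gamma-comparison} follows by bounding a \(\gamma=2\) numerator by \(K\) times the matching \(\gamma=1\) numerator and then dividing by \(M(\Q)\geq1\).

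For the probability, use \(X^2\one\{X\leq K\}\leq KX\one\{X\leq K\}\), which holds because \(0\leq X\leq K\) on the event. Integrating gives \(A_2(\Q)\leq KA_1(\Q)\), so
\[
q_{2,K}(\Q)=A_2(\Q)/M(\Q)\leq A_2(\Q)\leq KA_1(\Q)=Kq_{1,K}(\Q)\leq q_{1,K}(\Q).
\]
The last step uses \(K<1\) and \(q_{1,K}\geq0\).

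The shortfall works the same way. The integrands are \((K-X)^+X^2\) and \((K-X)^+X\), and on \(\{X\leq K\}\) we have \((K-X)X^2\leq K(K-X)X\). Hence \(KA_2-A_3\leq K(KA_1-A_2)\), which says \(\ell_{2,K}(\Q)\,M(\Q)\leq K\ell_{1,K}(\Q)\). Dividing by \(M(\Q)\geq1\) and using that both sides are nonnegative gives \(\ell_{2,K}\leq K\ell_{1,K}\). Throughout I would use the closed-event moments \(A_r\). By \eqref{eq:general-threshold-atom} an atom at \(K\) affects both sides consistently, so it needs no separate treatment.

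Depth is the step I expect to need the most care, because \(M\) cancels there and the comparison rests on a monotone-reweighting argument instead. Write \(s_{\gamma,K}=K-A_{\gamma+1}/A_\gamma\), so the claim \(s_{2,K}\leq s_{1,K}\) is equivalent to \(A_3/A_2\geq A_2/A_1\), that is, \(A_1A_3\geq A_2^2\). This is Cauchy--Schwarz applied to \(X^{1/2}\one\{X\leq K\}\) and \(X^{3/2}\one\{X\leq K\}\) under \(\Q\).

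Equivalently, the conditional law of \(X\) given the crash under \(\Pp_2\) is the size-biased version of the conditional law under \(\Pp_1\). That means its conditional mean is larger, so its depth is smaller. Positivity of \(A_1\) and \(A_2\), which the hypothesis supplies, makes both ratios well defined, and finiteness of the second moment on the domain keeps \(M(\Q)\) finite. No closure or admissibility argument is needed, because the statement is pointwise in \(\Q\).
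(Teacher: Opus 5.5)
Your proposal is correct and follows essentially the same route as the paper. Both arguments use the pointwise bounds \(X^2\leq KX\) and \((K-X)X^2\leq K(K-X)X\) on the crash event together with \(M(\Q)\geq(\E_\Q[X])^2=1\), and both obtain the depth ordering from \(A_2^2\leq A_1A_3\); the paper cites this as log-convexity of the truncated moments, which is exactly the Cauchy--Schwarz step you give.
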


\begin{proof}
On the event \(X\leq K\), \(X^2\leq KX\).  Also \(M(\Q)\geq M_1(\Q)^2=1\).  Hence
\[
  q_{2,K}
  =\frac{A_{2}}{M(\Q)}
  \leq A_{2}
  \leq K A_{1}
  =Kq_{1,K}
  \leq q_{1,K}.
\]
The same argument applied to \((K-X)X^2\leq K(K-X)X\) gives \(\ell_{2,K}\leq K\ell_{1,K}\).  The truncated moment sequence is log-convex, so \(A_{2}^2\leq A_{1}A_{3}\).  Dividing by \(A_{1}A_{2}>0\) and subtracting from \(K\) gives \(s_{2,K}\leq s_{1,K}\).
\end{proof}

Taking extrema in \eqref{eq:gamma-comparison} compares corresponding endpoints: each \(\gamma=2\) probability and shortfall endpoint lies below \(K\) times its \(\gamma=1\) counterpart, and each depth endpoint lies below its counterpart on the common positive-probability domain. The inequalities impose no cross-endpoint ordering, and the optimizing measures can differ.

\subsection{Full-sample comparison}
\label{app:kernel-empirics}

Table~\ref{tab:kernel-comparison} reports both kernels on the same cross sections, the same event \(K=0.90\), the same support cap \(B=5\), and the same all-OTM quote sets.  The sample keeps the cross sections where both kernels' programs solve, a criterion that differs from the verification behind Table~\ref{tab:headline}, so the counts differ slightly.  The \(\gamma=2\) intervals are narrower for every coordinate at every horizon. Proposition~\ref{prop:kernel-comparison} orders endpoint levels and is silent about widths, so the width ordering is an empirical regularity.  The two kernels nonetheless agree on how identification varies across horizons and coordinates.

\begin{table}[t]
\caption{Kernel comparison on common cross sections}
\label{tab:kernel-comparison}
\centering
\small
\begin{threeparttable}
\begin{tabular}{lrrrrrrr}
\toprule
& & \multicolumn{2}{c}{$q$ width}
& \multicolumn{2}{c}{$\ell$ width}
& \multicolumn{2}{c}{$s$ width} \\
\cmidrule(lr){3-4}\cmidrule(lr){5-6}\cmidrule(lr){7-8}
Horizon & Cross sections & $\gamma=1$ & $\gamma=2$
& $\gamma=1$ & $\gamma=2$ & $\gamma=1$ & $\gamma=2$ \\
\midrule
23 days & 405 & 1.54 & 1.40 & 0.017 & 0.013 & 4.93 & 4.81 \\
30 days & 369 & 2.08 & 1.88 & 0.024 & 0.017 & 4.27 & 4.10 \\
37 days & 332 & 2.57 & 2.35 & 0.033 & 0.023 & 4.54 & 4.26 \\
\bottomrule
\end{tabular}
\begin{tablenotes}[flushleft]
\footnotesize
\item Notes: Median interval widths in percentage points, computed on the cross sections where both kernels solve, under all retained OTM quotes, \(K=0.90\), \(B=5\), and each cross section's fixed discounting and forward.
\end{tablenotes}
\end{threeparttable}
\end{table}

\section{Supplementary Evidence}
\label{app:supplementary}

\subsection{A more extreme threshold}
\label{app:extreme-threshold}

Table~\ref{tab:extreme-threshold} repeats the all-OTM analysis at \(K=0.85\).  The probability intervals are narrower than at \(K=0.90\) and the conditional depth intervals are wider: a more extreme event can have a narrow probability range and a wide range of depth conditional on occurring.  The quote layers contribute in the same order as at the main threshold: adding the six nearby puts cuts the median put-pair widths by 70 to 75 percent, and the complete put wing and the calls deliver the remaining contraction to the widths in the table.

\begin{table}[t]
\caption{Identified sets for a 15 percent decline}
\label{tab:extreme-threshold}
\centering
\small
\begin{threeparttable}
\begin{tabular}{lrrrrr}
\toprule
Horizon & Cross sections & Median $q_{2,K}$ & $q$ width & Depth cross sections & $s$ width \\
\midrule
23 days & 404 & [0.14, 0.70] & 0.53 & 393 & 9.98 \\
30 days & 373 & [0.33, 1.08] & 0.76 & 362 & 7.70 \\
37 days & 330 & [0.47, 1.61] & 1.07 & 329 & 8.07 \\
\bottomrule
\end{tabular}
\begin{tablenotes}[flushleft]
\footnotesize
\item Notes: All retained OTM quotes, \(K=0.85\), \(B=5\).  Probability entries are medians of interval endpoints across cross sections, in percent.  Width entries are median interval widths, in percentage points.  Depth widths use the cross sections whose closed probability interval excludes zero.
\end{tablenotes}
\end{threeparttable}
\end{table}

\subsection{Quote perturbations and deletions}
\label{app:quote-perturbations}

Two exercises alter the information content of the quotes on a prespecified subset of 24 cross sections from the robustness subsample.  Doubling every bid--ask spread around its midpoint changes a probability endpoint by a median of 0.79 percentage points and a conditional depth endpoint by a median of 1.84 percentage points.  Retaining every other strike changes the corresponding endpoints by medians of 0.073 and 0.122 percentage points.

A reverse decomposition deletes groups of quotes from the full OTM set on the robustness subsample of 180 cross sections.  Removing the put wing beyond the local eight widens the median probability interval by 0.46, 0.76, and 0.52 percentage points at the three horizons.  Removing all OTM calls widens it by 0.05, 0.08, and 0.13 percentage points.  Removing the six nearby puts while keeping every other quote leaves the median width unchanged, because the neighboring wing strikes substitute for them.  Marginal contributions therefore depend on the base set: nearby puts carry most of the information relative to the sparse pair, and the wing carries unique information that survives even when the local strikes are dense.

\subsection{Verification of the regularity condition}
\label{app:regularity-table}

Table~\ref{tab:regularity} summarizes the verification of Assumption~\ref{ass:regularity} in each cross section, described in Section~\ref{sec:reported-statistics}.  The verification fails only in the 23- and 37-day cross sections of 20 January 2016 and the 30-day cross sections of 31 May 2017 and 7 September 2022.

\begin{table}[t]
\caption{Verification of the regularity condition by cross section}
\label{tab:regularity}
\centering
\small
\begin{threeparttable}
\begin{tabular}{lrrrr}
\toprule
Horizon & Cross sections & Verified & Rate (\%) & Median slack ($\times10^{-5}$) \\
\midrule
23 days & 406 & 405 & 99.8 & 1.1 \\
30 days & 376 & 374 & 99.5 & 1.1 \\
37 days & 333 & 332 & 99.7 & 1.5 \\
\bottomrule
\end{tabular}
\begin{tablenotes}[flushleft]
\footnotesize
\item Notes: A cross section is verified when an admissible measure prices every retained quote strictly inside its bid--ask interval and places mass on both sides of the forward away from strike and threshold boundaries. Slack is the smallest distance from a constructed option value to its quote endpoints, in normalized price units.
\end{tablenotes}
\end{threeparttable}
\end{table}

\subsection{Directional joint restrictions}
\label{app:joint-directions}

Let \(q_{\mathrm{pair}}^-\) and \(\ell_{\mathrm{pair}}^-\) be the put-pair lower endpoints of a cross section, and let \(w_q^{\mathrm{pair}}\) and \(w_\ell^{\mathrm{pair}}\) be the positive put-pair widths.  Joint comparisons use the normalized coordinates
\begin{equation}
  \bar q
  =\frac{q-q_{\mathrm{pair}}^-}{w_q^{\mathrm{pair}}},
  \qquad
  \bar\ell
  =\frac{\ell-\ell_{\mathrm{pair}}^-}{w_\ell^{\mathrm{pair}}}.
  \label{eq:joint-normalization}
\end{equation}
Let \(\overline\Theta\) be the normalized joint set, \(\overline{\mathcal R}\) its marginal rectangle, and \(H_{\mathcal A}(d)=\sup_{z\in\mathcal A}d^\top z\) the support function of a set \(\mathcal A\) in direction \(d\).  For the four diagonal directions
\[
  \mathcal D
  =
  \left\{
  \tfrac{(1,1)}{\sqrt2},
  \tfrac{(1,-1)}{\sqrt2},
  \tfrac{(-1,1)}{\sqrt2},
  \tfrac{(-1,-1)}{\sqrt2}
  \right\},
\]
the normalized support gap and the directional contraction between nested quote sets \(\mathcal A\subseteq\mathcal B\) are
\begin{equation}
  g(d)
  =
  100\left[
  H_{\overline{\mathcal R}}(d)
  -H_{\overline\Theta}(d)
  \right],
  \qquad
  c_{\mathcal A\rightarrow\mathcal B}(d)
  =
  100\left[
  H_{\overline\Theta^{\mathcal A}}(d)
  -H_{\overline\Theta^{\mathcal B}}(d)
  \right].
  \label{eq:directional-statistics}
\end{equation}

Figure~\ref{fig:directional-appendix} summarizes these statistics on the 976 cross sections shared by the four quote sets.  The gap between the joint set and its marginal rectangle concentrates in the two directions that place positive weight on shortfall, where the horizon-level median gaps run from 2.7 to 4.0 percent of the put-pair scale.  Pooling all four directions, the median all-OTM gap is 0.52 percent.  The contraction statistics show where each layer acts.  Moving from the pair to the local eight produces the largest contractions in high-probability directions, with medians of 33 and 51 percent of the pair scale, while moving from the local eight to the complete put wing dominates in low-probability directions, with medians of 8 and 43 percent.  The median contraction from the put pair to the all-OTM set across the four directions is 50 percent of the pair scale.

\begin{figure}[t]
\centering
\includegraphics[width=\textwidth]{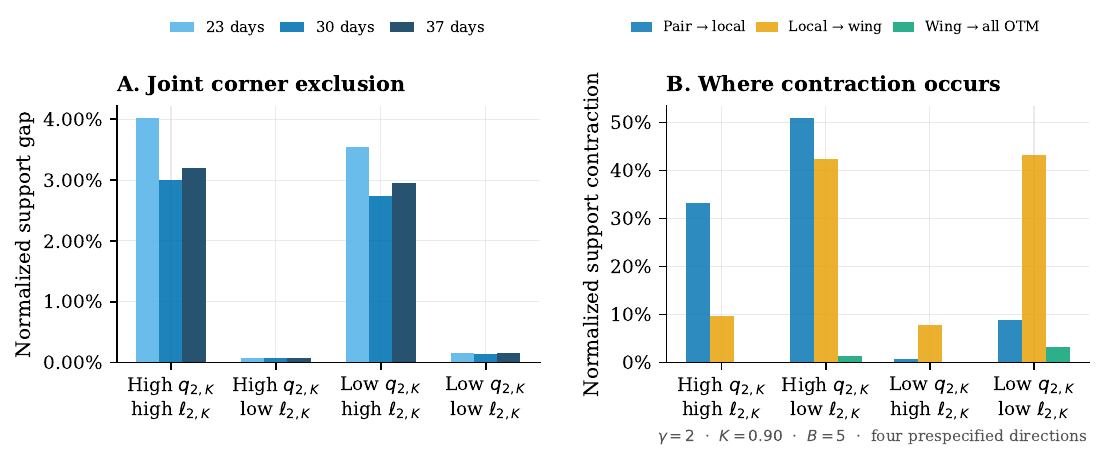}
\caption{Directional joint restrictions and quote-set contraction. Panel A reports the median normalized support gap between the all-OTM joint set and its marginal rectangle in four prespecified directions. Panel B reports the median support contraction contributed by adjacent quote layers in the same directions.  Both panels use each cross section's put-pair marginal widths for normalization and cover the 976 cross sections shared by the four quote sets.  Results use \(\gamma=2\), \(K=0.90\), and \(B=5\).}
\label{fig:directional-appendix}
\end{figure}

\subsection{Accuracy of the plotted joint sets}
\label{app:joint-geometry}

The joint sets of Section~\ref{sec:joint-results} are drawn as intersections of supporting half planes on an adaptive direction grid that doubles from 32 to at most 128 directions until the gap to the inner hull of support points falls below one percent of the put-pair scale.  Table~\ref{tab:joint-geometry} reports the realized accuracy for each plotted cross section: the largest normalized gap across the four quote sets is at most 0.99 percent, so every plotted envelope is visually indistinguishable from the identified set at the resolution of the figures.

\begin{table}[t]
\caption{Adaptive-grid accuracy of the plotted joint sets}
\label{tab:joint-geometry}
\centering
\small
\begin{threeparttable}
\begin{tabular}{llrrr}
\toprule
Figure cross section & Date & Horizon & Directions & Max gap (\%) \\
\midrule
2015 selloff & 26 Aug 2015 & 30 days & 128 & 0.80 \\
Calm market & 4 Dec 2019 & 30 days & 128 & 0.93 \\
COVID-19 crash & 25 Mar 2020 & 30 days & 128 & 0.90 \\
2022 selloff & 8 Jun 2022 & 30 days & 128 & 0.98 \\
COVID-19, 23 days & 11 Mar 2020 & 23 days & 128 & 0.76 \\
COVID-19, 37 days & 18 Mar 2020 & 37 days & 64 & 0.99 \\
\bottomrule
\end{tabular}
\begin{tablenotes}[flushleft]
\footnotesize
\item Notes: For each plotted cross section, Directions is the largest adaptive grid level used across the four quote sets, and Max gap is the largest normalized support gap between the plotted outer polygon and the inner hull of its support points, in percent of the put-pair scale.  The 30-day COVID-19 cross section appears in both Figures~\ref{fig:market-states} and~\ref{fig:covid-horizons}.
\end{tablenotes}
\end{threeparttable}
\end{table}

\end{document}